\newtheorem{theorem}{Theorem}[section]
\newtheorem{lemma}[theorem]{Lemma}
\newcommand{\conv}{\operatorname{Conv}}
\newcommand\blfootnote[1]{%
  \begingroup
  \renewcommand\thefootnote{}\footnote{#1}%
  \addtocounter{footnote}{-1}%
  \endgroup
}
\newcommand{\MyQuote}[1]{\vspace{0.5cm}\addtocounter{equation}{1}%
     \parbox{0.8\textwidth}{#1}\hspace*{1.5cm}(\arabic{equation})\\[0.5cm]}
\begin{document}

\newpage

\title{Carath\'eodory's Theorem in Depth}

\author{Ruy~Fabila-Monroy\thanks{Departamento de Matem\'aticas, Cinvestav, D.F., Mexico. \texttt{ruyfabila@math.cinvestav.edu.mx} Partially supported
by Conacyt of Mexico grant \textbf{253261}. }
	\and Clemens Huemer\thanks{Departament de Matem\'atiques, UPC, Barcelona, Spain.  \texttt{clemens.huemer@upc.edu} Partially 
	supported by projects \textbf{MTM2015-63791-R} and \textbf{Gen. Cat. DGR 2014SGR46}}}
\maketitle

\blfootnote{\begin{minipage}[l]{0.3\textwidth} \includegraphics[trim=10cm 6cm 10cm 5cm,clip,scale=0.15]{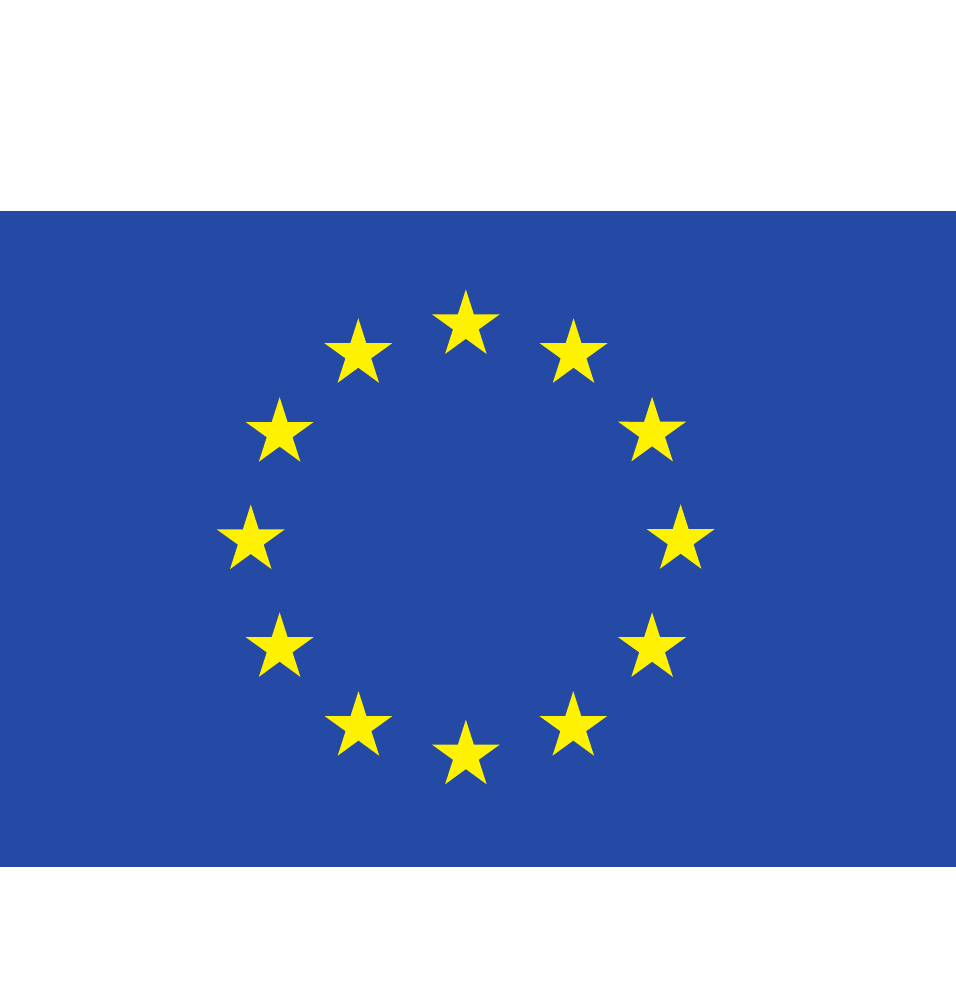} \end{minipage}  \hspace{-2cm} \begin{minipage}[l][1cm]{0.82\textwidth}
 	  This project has received funding from the European Union's Horizon 2020 research and innovation programme under the Marie Sk\l{}odowska-Curie grant agreement No 734922.
 	\end{minipage}}

\begin{abstract}

Let $X$ be a finite set of points in $\mathbb{R}^d$. The Tukey depth
of a point $q$ with respect to $X$ is the minimum number $\tau_X(q)$ of points of $X$ in a halfspace
containing $q$. In this paper we prove a depth version of Carath\'eodory's theorem.
In particular, we prove that there exist a constant $c$ (that depends
only on $d$ and $\tau_X(q)$) and pairwise disjoint sets $X_1,\dots, X_{d+1} \subset X$ such that the following holds. 
Each $X_i$ has at least $c|X|$ points, and for every choice of points
$x_i$ in $X_i$, $q$ is a convex combination of  
$x_1,\dots, x_{d+1}$. We also prove depth versions of Helly's and Kirchberger's theorems.
\end{abstract}

\section{Introduction}

Carath\'eodory's theorem was proven by Carath\'edory in 1907; it is one of the fundamental results in convex geometry.
For sets of points in $\mathbb{R}^d$, it states the following.
\begin{theorem}[\textbf{Carath\'edory's theorem~\cite{caratheodory}}]
Let $X$ be a set of points in $\mathbb{R}^d$. If a point $q$ is contained in the convex hull, $\conv(X)$, of $X$,
then there exist $x_1,\dots,x_{d+1} \in X$ such that $q \in \conv(\{x_1,\dots,x_{d+1}\})$.
\end{theorem}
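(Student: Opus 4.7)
The plan is to start from any convex representation of $q$ using points of $X$ and show that whenever this representation uses strictly more than $d+1$ points it can be shortened by one term. Iterating until the number of points drops to at most $d+1$ then yields the desired conclusion.

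More concretely, I would fix a representation $q = \sum_{i=1}^{k} \lambda_i x_i$ with $x_i \in X$, $\lambda_i > 0$, and $\sum_{i=1}^{k} \lambda_i = 1$, chosen so that $k$ is as small as possible (such a representation exists because $q \in \conv(X)$). If $k \leq d+1$ we are finished, so I assume $k > d+1$ and aim for a contradiction with this minimality. The $k-1$ difference vectors $x_2 - x_1, \ldots, x_k - x_1$ live in $\mathbb{R}^d$ and must therefore be linearly dependent, producing coefficients $\mu_1, \ldots, \mu_k$, not all zero, such that $\sum_{i=1}^{k} \mu_i x_i = 0$ and $\sum_{i=1}^{k} \mu_i = 0$.

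The next step is to exploit this linear relation to kill one of the $\lambda_i$ while preserving positivity of the others. For any real $t$ we have $q = \sum_{i=1}^{k}(\lambda_i - t\mu_i) x_i$, and the new coefficients still sum to $1$. Since the $\mu_i$ sum to zero and are not all zero, some $\mu_i$ is strictly positive, so $t^{*} := \min\{\lambda_i/\mu_i : \mu_i > 0\}$ is well defined and positive. With this choice the perturbed weights $\lambda_i - t^{*}\mu_i$ are nonnegative, sum to $1$, represent $q$, and at least one of them vanishes by the definition of $t^{*}$. Removing the vanishing term gives a representation of $q$ with at most $k-1$ points of $X$, contradicting the minimality of $k$.

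The argument is essentially routine once the linear dependence is in place, so I do not expect a serious obstacle; the only point requiring care is the bookkeeping of signs, namely making sure that $t^{*}$ is chosen so that every updated coefficient $\lambda_i - t^{*} \mu_i$ is genuinely nonnegative (if no $\mu_i$ is positive one simply replaces the whole tuple $(\mu_i)$ by its negative, which is legitimate because the relation $\sum \mu_i x_i = 0$, $\sum \mu_i = 0$ is symmetric under sign change).
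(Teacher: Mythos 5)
Your argument is correct and complete: it is the standard proof of Carath\'eodory's theorem via a minimal-length convex representation, using the affine dependence of any $k>d+1$ points in $\mathbb{R}^d$ to perturb the weights along the relation $\sum_i \mu_i x_i = 0$, $\sum_i \mu_i = 0$ until one weight vanishes. The sign bookkeeping you flag is handled properly (negate the $\mu_i$ if none is positive, then take $t^{*}=\min\{\lambda_i/\mu_i:\mu_i>0\}$), and the contradiction with minimality of $k$ goes through. Note that the paper does not prove this statement at all --- it is quoted as classical background with a citation to Carath\'eodory's original 1907 paper --- so there is no in-paper proof to compare against; your proof is the canonical one and would serve as a correct self-contained justification.
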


The Tukey depth of a point $q$ with respect to a finite point set $X$ is a parameter $0 \le \tau_X(q) \le 1$ that measures how ``deep'' $q$
is inside of $\conv(X)$.  In this paper we prove the following depth dependent version of Carath\'eodory's theorem.
\begin{theorem}[\textbf{Depth Carath\'eodory's theorem}]\label{thm:car_depth}
Let $X$ be a finite set of points in $\mathbb{R}^d$ (or a Borel probability measure
 on $\mathbb{R}^d$) and let $q$ be a point in $\mathbb{R}^d$ of positive Tukey depth
 with respect to $X$. Then there exists a positive
 constant $c=c(d,\tau_X(q))$ (that depends only on $d$ and $\tau_X(q)$), such that the following holds. 
 There exist pairwise disjoint subsets $X_1,\dots, X_{d+1}$  of $X$ 
 such that for every choice of points $x_i \in X_i$, $q \in \conv(\{x_1,\dots,x_{d+1}\})$.
 Moreover, each of the $X_i$ consists of at least $c|X|$ points.
\end{theorem}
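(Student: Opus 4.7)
My first step is to reformulate the conclusion in a more tractable form. Observe that the condition ``$q \in \conv(\{x_1,\ldots,x_{d+1}\})$ for every transversal $(x_1,\ldots,x_{d+1}) \in X_1 \times \cdots \times X_{d+1}$'' is equivalent to the following covering property: for every closed halfspace $H$ with $q \in \partial H$, there is some index $i$ with $X_i \subseteq H$. Indeed, $q \notin \conv(\{x_1,\ldots,x_{d+1}\})$ iff some open halfspace through $q$ is disjoint from $\{x_1,\ldots,x_{d+1}\}$, and the ability to build such a ``bad'' transversal corresponds precisely to each $X_i$ having a representative in that open halfspace. Thus we are looking for $d+1$ large subsets of $X$ whose ``hemisphere shadows'' on the direction sphere $S^{d-1}$ jointly cover it.

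Guided by this reformulation, my plan is to set $X_i = X \cap C_i$, where $C_1,\ldots,C_{d+1}$ are cones with apex $q$ pointed along unit vectors $v_1,\ldots,v_{d+1}$ forming a regular simplex on $S^{d-1}$ (so $\sum_i v_i = 0$ and $\|v_i\|=1$), each of half-angle $\theta = \arcsin(1/d)$. This specific $\theta$ is small enough to keep the cones pairwise disjoint (since $2\theta \leq \arccos(-1/d)$, the pairwise angle between the $v_i$) and just large enough to enforce the covering property. The key numerical fact is the regular-simplex identity $\min_{w\in S^{d-1}}\max_i \langle v_i,w\rangle = 1/d = \sin\theta$, which combined with the spherical triangle inequality yields: for every $w \in S^{d-1}$ some cone $C_i$ lies entirely in the halfspace $\{y:\langle w, y-q\rangle\geq 0\}$. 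This simultaneously ensures the disjointness of the $X_i$ and the desired transversal property.

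The remaining, essential task is to show that the regular simplex $\{v_i\}$ can be rotated so that each $|X\cap C_i| \geq c(d,\tau)|X|$. Averaging over $R \in SO(d)$ with Haar measure gives that the expected value of $|X \cap C_i(R)|$ equals the solid-angle fraction of $C_i$ times $|X|$, a positive constant depending only on $d$; this controls the sum of the $|X_i|$'s but not their minimum. Here the hypothesis $\tau_X(q) = \tau > 0$ enters essentially, since it forbids $X$ from being angularly concentrated around any direction from $q$ (otherwise the opposite halfspace would be nearly empty). I plan to exploit this via the vector-valued map $R \mapsto (|X\cap C_1(R)|,\ldots,|X\cap C_{d+1}(R)|)$, using the cyclic $S_{d+1}$-action that permutes the cones together with a topological equipartition / Borsuk--Ulam-type argument to find a rotation at which all $d+1$ counts are simultaneously at least $c(d,\tau)|X|$.

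The hardest step is this final balancing: pure averaging is not enough, and one must use the Tukey depth \emph{quantitatively} to rule out the adversarial configurations in which some cone is forced to stay almost empty for every choice of rotation. Extracting an explicit $c(d,\tau)$ will combine the solid-angle estimate with a careful quantitative version of the topological balancing argument. The measure-theoretic version of the theorem should then follow from the finite case by a standard approximation / weak-$*$ compactness argument.
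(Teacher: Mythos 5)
Your reformulation of the transversal condition as a hemisphere-covering property for the radial shadows of the $X_i$ on $S^{d-1}$ is correct, and so is the computation that $\theta=\arcsin(1/d)$ is exactly the largest half-angle for which a regular simplex of congruent caps lies, for every direction $w$, with at least one cap inside the closed hemisphere determined by $w$. But the construction $X_i=X\cap C_i$ with $d+1$ pairwise disjoint congruent cones apexed at $q$ cannot work, and no rotation argument (topological or otherwise) can repair it. Take $d=2$, $q$ the origin, and $X$ consisting of two tiny clusters of $n/2$ points each, one near $(1,0)$ and one near $(-1,0)$. Every closed halfplane through $q$ contains a closed half-circle of directions and hence at least one of the two antipodal cluster directions, so $\tau_X(q)\approx 1/2$: the hypothesis holds with essentially the maximum possible depth. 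Yet the radial shadow of $X$ on $S^{1}$ is two tiny arcs, while your three cones have angular width $60^{\circ}$ and are separated by gaps of $60^{\circ}$; each tiny arc therefore meets at most one cone, so for \emph{every} rotation at least one cone contains no point of $X$ and $\min_i|X_i|=0$.

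The underlying issue is that the theorem only requires the $X_i$ to be disjoint as subsets of $X$; their angular shadows may, and in general must, overlap. In the example one takes $X_1$ to be the left cluster and $X_2,X_3$ to be two halves of the right cluster, whose shadows coincide; the covering property is then satisfied because every closed half-circle contains one of the two antipodal directions. By forcing the shadows into pairwise disjoint congruent caps of a width fixed in advance by $d$, you have over-constrained the problem, so the final balancing step that you flag as the hardest and leave as a plan is not merely incomplete --- it is unachievable as stated. For contrast, the paper builds the $X_i$ adaptively: it bisects $X$ by a hyperplane through $q$, radially projects from $q$ onto two parallel hyperplanes, uses a ``projection Tukey depth'' centerpoint lemma and induction on $d$ to obtain $2d$ good sets, and then applies the B\'ar\'any--Valtr same-type lemma together with Carath\'eodory's theorem to cut $2d$ down to $d+1$. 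That route lets the resulting regions degenerate into thin, nearly antipodal cones exactly when $X$ is angularly concentrated, which is what your fixed-cone construction cannot do.
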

Informally, Theorem~\ref{thm:car_depth} states that the deeper $q$
is inside $\conv(X)$, the larger
subsets $X_1,\dots,X_{d+1}$ of $X$ exist, such that for every choice of points
$x_i \in X_i$, $q$ is contained in the convex hull  of
$\{x_1,\dots, x_{d+1}\}$. In this sense, Carath\'eodory's theorem states that sets $X_i$ of cardinality
one always exist, whenever $q$ has positive depth.

This paper is organized as follows. In Section~\ref{sec:helly} we review two other fundamental theorems in convex geometry: Helly's and Kirchberger's theorems.
In Section~\ref{sec:center_point} we formalize the notion of depth, and 
review centerpoint theorems that guarantee the existence of points of large depth.
In Section~\ref{sec:projection} we present a new notion of depth together
with its centerpoint theorem (Theorem~\ref{thm:center_point_projection}). Using these results we prove the Depth
Carath\'eodory's theorem (Theorem~\ref{thm:car_depth}) in Section~\ref{sec:car_depth}.
In Section~\ref{sec:car_depth_plane} we prove a stronger planar version
of the Depth Carath\'eodory's theorem for points of small depth. 
Finally in Section~\ref{sec:helly_depth},
we prove depth versions of Helly's and Kirchberger's theorems.

\subsection{Helly's and Kirchberger's Theorems}\label{sec:helly}

\begin{theorem}[\textbf{Helly's theorem}]
 Let $\mathcal{F}=\{C_1,\dots,C_n\}$ be a family of $n \ge d+1$ convex sets in $R^{d}$.
 Suppose that for every choice of $(d+1)$ sets $D_i \in \mathcal{F}$ we have that $\bigcap_{i=1}^{d+1} D_i \neq \emptyset$.
 Then all of $\mathcal{F}$ intersects, that is $\bigcap \mathcal{F} \neq \emptyset$.
\end{theorem}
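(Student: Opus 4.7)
The plan is to prove Helly's theorem by induction on $n$, with Radon's partition theorem as the essential ingredient. The base case $n=d+1$ is precisely the hypothesis, so I would start the induction at $n=d+2$ and assume the statement holds for all families of size between $d+1$ and $n-1$.

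For the inductive step, fix a family $\mathcal{F} = \{C_1,\dots,C_n\}$ with $n \ge d+2$ satisfying the $(d+1)$-wise intersection hypothesis. For each index $i$, the subfamily obtained by deleting $C_i$ still satisfies the hypothesis and has size $n-1 \ge d+1$, so by the inductive hypothesis there exists a point
\[
p_i \in \bigcap_{j \neq i} C_j.
\]
This produces $n \ge d+2$ points $p_1,\dots,p_n$ in $\mathbb{R}^d$. Radon's theorem then supplies a partition of the index set $\{1,\dots,n\}$ into two disjoint parts $A$ and $B$ and a point
\[
p \in \conv\bigl(\{p_i : i \in A\}\bigr) \cap \conv\bigl(\{p_j : j \in B\}\bigr).
\]

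The final step is to verify that $p \in \bigcap_{k=1}^n C_k$. Pick any $k$; by construction $k$ lies in exactly one of $A$, $B$, say $k \in A$. Then for every $j \in B$ we have $j \neq k$, so $p_j \in C_k$ by the defining property of $p_j$. Hence $\{p_j : j \in B\} \subseteq C_k$, and by convexity $\conv(\{p_j : j \in B\}) \subseteq C_k$, so in particular $p \in C_k$. The symmetric argument handles $k \in B$.

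The only real obstacle is that the proof presumes Radon's partition theorem, which itself requires proof (via solving a homogeneous linear system on the affine-dependence relations among $d+2$ points). If one has not already established Radon, the alternative is a direct argument separating the roles of ``deepest'' support halfspaces, but invoking Radon is by far the cleanest route and is the standard approach I would follow.
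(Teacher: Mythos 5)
Your proof is correct: the induction on $n$ with Radon's partition theorem as the engine is the standard argument, and the verification step (for $k\in A$, every $p_j$ with $j\in B$ lies in $C_k$, so $\conv(\{p_j:j\in B\})\subseteq C_k$ and hence $p\in C_k$) is airtight. One small point worth making explicit: Radon's theorem is usually stated for exactly $d+2$ points, so for $n>d+2$ you should either apply it to some $d+2$ of the $p_i$ and absorb the remaining indices into one side of the partition (which only enlarges the convex hulls, so the common point survives), or note that the affine-dependence argument extends verbatim; either way the gap is cosmetic. For comparison: the paper does not prove Helly's theorem at all --- it states it as classical background with citations --- but the route it gestures at is different from yours. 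The paper's Lemma~\ref{lem:link} associates to each point $x_i$ a halfspace $H_q(x_i)$ so that $q\in\conv(\{x_1,\dots,x_n\})$ if and only if $\bigcap_i H_q(x_i)=\emptyset$; this is the standard bridge for deducing Helly from Carath\'eodory (and vice versa), and it is the mechanism the paper actually uses later to pass between its depth versions of the two theorems. Your Radon-based proof is self-contained and more elementary in that it needs no prior Carath\'eodory-type input, whereas the Lemma~\ref{lem:link} route buys the tight two-way equivalence between the theorems that the paper exploits.
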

\begin{theorem}[\textbf{Kirchberger's theorem}~\cite{kirchberger}]
Let $R$ and $B$ be finite sets of points in $\mathbb{R}^d$. 
Suppose that for every subset $S \subset R \cup B$ of $d+2$ points, the set $S \cap R$
can be separated from $S \cap B$ by a hyperplane. Then $R$ can be separated
from $B$ by a hyperplane. 
\end{theorem}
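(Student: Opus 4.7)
The plan is to reduce Kirchberger's theorem to Helly's theorem by lifting into the $(d+1)$-dimensional parameter space of affine hyperplanes of $\mathbb{R}^d$. Identify a hyperplane $\{x \in \mathbb{R}^d : a \cdot x + b = 0\}$ with the pair $(a,b) \in \mathbb{R}^{d+1}$. To each red point $r \in R$ I would associate the open half-space
$$C_r := \{(a,b) \in \mathbb{R}^{d+1} : a \cdot r + b > 0\},$$
and to each blue point $p \in B$ the open half-space
$$C_p := \{(a,b) \in \mathbb{R}^{d+1} : a \cdot p + b < 0\}.$$
Each $C_r$ and $C_p$ is convex, and a point $(a^*, b^*)$ common to all of them corresponds precisely to a hyperplane in $\mathbb{R}^d$ strictly separating $R$ from $B$, which is the desired conclusion.

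Next I would verify that the hypothesis of the theorem is exactly the Helly hypothesis in $\mathbb{R}^{d+1}$ for the finite family $\{C_r : r \in R\} \cup \{C_p : p \in B\}$. Any choice of $d+2$ members of this family corresponds to a subset $S \subseteq R \cup B$ of at most $d+2$ points, together with its color partition $S = (S \cap R) \cup (S \cap B)$. By hypothesis there is a hyperplane $\{x : a_S \cdot x + b_S = 0\}$ separating $S \cap R$ from $S \cap B$; after orienting so that the red points lie on the positive side, $(a_S, b_S)$ is a common point of the $d+2$ chosen sets. Applying Helly's theorem in $\mathbb{R}^{d+1}$ (every $d+2$ of the convex sets intersect) then furnishes a common point $(a^*, b^*)$ of the whole family, giving the required separating hyperplane for $R$ and $B$.

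The main obstacle, though minor, is checking that $(a^*, b^*)$ genuinely defines a hyperplane, i.e.\ that $a^* \neq 0$. If $a^* = 0$ then $(0,b^*) \in C_r$ forces $b^* > 0$ while $(0,b^*) \in C_p$ forces $b^* < 0$, which is impossible as soon as both $R$ and $B$ are nonempty; the case where one of them is empty is trivial. A small bookkeeping remark is also needed: when $|R \cup B| \le d+1$ the hypothesis applied to $S = R \cup B$ immediately gives a separating hyperplane, so Helly is only needed in the nontrivial range $|R \cup B| \ge d+2$. Beyond these two points, the argument is a direct translation between separation in the ground space $\mathbb{R}^d$ and intersection in the parameter space $\mathbb{R}^{d+1}$.
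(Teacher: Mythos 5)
Your argument is correct, and it is the classical ``dualize and apply Helly'' proof of Kirchberger's theorem: you pass to the parameter space $\mathbb{R}^{d+1}$ of affine functionals $x \mapsto a\cdot x + b$, attach to each colored point the open half-space of functionals that put it strictly on its prescribed side, observe that the $(d+2)$-wise intersection hypothesis is exactly the Kirchberger hypothesis, and finish with Helly in $\mathbb{R}^{d+1}$ (which, for a finite family, applies to open convex sets as well). Your two side remarks are the right ones to make: ruling out $a^*=0$, and disposing of the case $|R\cup B|\le d+1$. One convention you should fix explicitly is that ``separated'' means \emph{strictly} separated in both hypothesis and conclusion; with weak separation the functional $(a,b)=(0,0)$ lies in every closed version of your sets and the argument (and indeed the theorem in that form) degenerates.

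Note that the paper itself gives no proof of this classical statement; it only records that Kirchberger follows from either Helly's or Carath\'eodory's theorem. The closest the paper comes is its proof of the \emph{Depth} Kirchberger theorem, which uses a different lifting: it sends $R$ to $R\times\{1\}$ and $B$ to $(-B)\times\{-1\}$ in $\mathbb{R}^{d+1}$, so that separating $S\cap R$ from $S\cap B$ becomes separating the lifted set from the origin, and then invokes (Depth) Carath\'eodory about the origin. Run with the classical Carath\'eodory theorem, that lifting yields an alternative proof of the statement you are proving: if no hyperplane separates $R$ from $B$, the origin lies in the convex hull of the lifted $n$ points in $\mathbb{R}^{d+1}$, hence in the convex hull of some $d+2$ of them, contradicting the hypothesis. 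The two routes are dual to one another --- yours trades separation in $\mathbb{R}^d$ for intersection of half-spaces in the space of hyperplanes, while the paper's trades it for membership of the origin in a convex hull --- and both cost one extra dimension; yours has the small additional burden of the $a^*\neq 0$ check, while the Carath\'eodory route needs the standard compactness/strictness remark to upgrade ``origin not in the convex hull of a finite set'' to strict separation.
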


Helly's theorem was discovered in 1913 by Helly, but he did not published it until 1923~\cite{helly}.
By then proofs by Radon~\cite{radon_helly} (1921) and by K\"onig~\cite{konig_helly} (1922)
had been published.
Carath\'eodory's and Helly's theorems are closely related in the sense that without
much effort each can be proven assuming the other (see for example Eggleston's book~\cite{convex_book});
and from either one of them one can prove Kirchberger's theorem. 
The link between Helly's and Carath\'eodory's theorems is given by the following lemma (see~\cite{convex_book}).

\begin{lemma}\label{lem:link}
 Let $x_1,\dots, x_n$ be points in $\mathbb{R}^d$;  let $H_q(x_i)$ be the halfspace that does not contain $q$, and that is
bounded by the following hyperplane: the hyperplane through $x_i$ and perpendicular to the line passing through $x_i$ and $q$.
 Then $\bigcap_{i=1}^n H_q(x_i)$ is empty if and only if $q \in \conv(\{x_1,\dots,x_n\})$.
\end{lemma}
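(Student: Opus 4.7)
The first step will be to turn the geometric description of $H_q(x_i)$ into coordinates. The hyperplane through $x_i$ perpendicular to the line $\overline{x_i q}$ has equation $\langle y - x_i,\, q - x_i\rangle = 0$, and since $q$ itself evaluates this form to $\|q - x_i\|^2 > 0$, the halfspace not containing $q$ is exactly
\[
H_q(x_i) \;=\; \{\,y \in \mathbb{R}^d : \langle y - x_i,\, q - x_i\rangle \leq 0\,\}.
\]
The case $q = x_i$ is degenerate; there $q \in \conv(X)$ trivially, and I would set $H_q(x_i) = \emptyset$ by convention so that the intersection is empty on both sides. With this setup the two implications can be attacked separately.

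For the $(\Leftarrow)$ direction, suppose $q = \sum_{i} \lambda_i x_i$ is a convex combination and assume toward contradiction that some $y$ belongs to every $H_q(x_i)$. The plan is to form the $\lambda_i$-weighted sum of the inequalities $\langle y - x_i,\, q - x_i\rangle \leq 0$. The terms linear in $y$ cancel because $\sum \lambda_i (q - x_i) = 0$, leaving
\[
\sum_{i} \lambda_i \,\langle y - x_i,\, q - x_i \rangle \;=\; \sum_{i} \lambda_i \|x_i\|^2 \,-\, \|q\|^2,
\]
whose right-hand side is nonnegative by convexity of $\|\cdot\|^2$ (Jensen's inequality), with equality only when $x_i = q$ for every $i$ with $\lambda_i > 0$. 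Excluding that degenerate case, every summand on the left being $\leq 0$ contradicts the sum being $\geq 0$, so no such $y$ exists.

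For the $(\Rightarrow)$ direction I prove the contrapositive. If $q \notin \conv(X)$, the separating hyperplane theorem supplies a unit vector $v$ and a scalar $c$ with $\langle v, q\rangle > c \geq \langle v, x_i\rangle$ for every $i$, so $\langle v, q - x_i\rangle > 0$ uniformly in $i$. The idea is then to push $q$ far in the direction $-v$: setting $y = q - Mv$ gives
\[
\langle y - x_i,\, q - x_i\rangle \;=\; \|q - x_i\|^2 \,-\, M\,\langle v, q - x_i\rangle,
\]
and since there are only finitely many points, $M$ can be chosen large enough to make this negative for every $i$ simultaneously, placing $y$ in $\bigcap_{i} H_q(x_i)$.

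The only real obstacle is bookkeeping with signs so that the weighted sum in the first half collapses to $\sum \lambda_i \|x_i\|^2 - \|q\|^2$ (nonnegative) rather than to its negative; once the orientations are consistent, both halves reduce to a one-line application of a standard tool -- Jensen's inequality on one side and the separating hyperplane theorem on the other.
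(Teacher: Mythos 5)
Your proof is correct and takes essentially the same route as the paper's: the $(\Leftarrow)$ direction is the same weighted sum of the halfspace inequalities (the paper sums the pointwise bounds $(z-q)\cdot(x_i-q)\ge |q-x_i|^2>0$ directly instead of invoking Jensen, but the computation is identical), and the $(\Rightarrow)$ direction is the same separating-hyperplane argument, taking a point far along the normal ray. One cosmetic slip: after you exclude the Jensen equality case the weighted sum is strictly positive, so the contradiction should be stated against the sum being $>0$ rather than merely $\ge 0$ (each summand being $\le 0$ is compatible with the sum being $0$).
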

\begin{proof}{}\

 $\Rightarrow )$ Suppose that $q \notin \conv(\{x_1,\dots,x_n\})$. Then there exist a hyperplane $\Pi$ 
 that separates $q$ from $\conv(\{x_1,\dots,x_n\})$. Let $\ell$ be the halfline with apex $q$, perpendicular
 to $\Pi$, and that intersects $\Pi$. Then for all $H_q(x_i)$, all but a finite segment
 of $\ell$ is contained in $H_q(x_i)$. This implies that $\bigcap_{i=1}^n H_q(x_i)$ is non-empty.
 
 $\Leftarrow )$ Suppose that $q \in \conv(\{x_1,\dots,x_n\})$. Then 
 there exist $\alpha_1,\dots, \alpha_{n}$, such that every $\alpha_i \ge 0$, 
\[\sum_{i=1}^{n} \alpha_i =1 \textrm{ and } \] 
\[q=\sum_{i=1}^{n} \alpha_i x_i.\] Note that for every $y \in H_q(x_i)$, we have that 
 \[(y-q)\cdot (x_i-q) \ge |q-x_i|^2>0. \] Suppose to the contrary that there exists
 a point $z \in \bigcap_{i=1}^n H_q(x_i)$. Since at least one of the $\alpha_i$ is greater
 than zero we have  
 \begin{eqnarray*}
  0 & = & (z-q)\cdot (q-q) \\
    & = & (z-q) \cdot \left ( \sum_{i=1}^{n} \alpha_i(x_i-q) \right ) \\
    & = &\sum_{i=1}^{n} \alpha_i (z-q) \cdot(x_i-q) >0,  
\end{eqnarray*}
a contradiction.
\end{proof}

Given the close relationship between Carath\'eodory's and Helly's theorems, whenever there is
a variant of one of them, it is natural to ask whether similar
variants exist of the other. We mention their colorful and fractional
variants.

In the classical versions of Helly's and Carath\'eodory's theorems, if every $(d+1)$-tuple of a family of objects
satisfies a certain property then all the family satisfies the property.
In the statements of the colorful versions of these theorems, the objects are  assigned one
of $d+1$ colors. A \emph{colorful tuple} is a tuple of objects containing one object of each available color.
The hypothesis is that the colorful tuples satisfy the same property as in the classical versions
and the conclusion is that one of the color classes satisfies the property. The colorful versions are the following.

\begin{theorem}[\textbf{Colorful Carath\'eodory's theorem~\cite{colorful_caratheodory}}]
 Let $X$ be a finite set of points in $\mathbb{R}^d$, such that each point
 is assigned one of $d+1$ colors. Let $q$ be a point in $\mathbb{R}^d$ such that
 every colorful tuple of points of $X$ does not contain $q$ in its convex hull.
 Then one of the color classes of $X$ does not contain $q$ in its convex hull.
\end{theorem}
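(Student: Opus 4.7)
The approach I would take is the classical ``nearest-point exchange'' argument originally due to B\'ar\'any. The plan is to argue by contrapositive: assume every color class $X_j$ contains $q$ in its convex hull, and among the finitely many colorful tuples $T=\{x_1,\dots,x_{d+1}\}$ (with $x_j\in X_j$) pick one minimizing the Euclidean distance $d(q,\conv(T))$. Let $p$ be the nearest point of $\conv(T)$ to $q$; the goal is to force $p=q$, putting $q\in\conv(T)$ and contradicting the assumption that no colorful tuple captures $q$.

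Supposing $p\neq q$, let $H$ be the hyperplane through $p$ perpendicular to $q-p$. Because $p$ is the metric projection of $q$ onto the convex set $\conv(T)$, all of $\conv(T)$ lies in the closed halfspace $H^-$ not containing $q$. Writing $p=\sum_i \lambda_i x_i$ as a convex combination and using $\sum_i \lambda_i(x_i-p)\cdot(q-p)=0$ together with $(x_i-p)\cdot(q-p)\le 0$, every vertex with $\lambda_i>0$ must in fact lie on $H$. The first important step is to guarantee that some coefficient $\lambda_j$ vanishes: in the affinely independent (generic) case this is automatic, since otherwise $p$ would be interior to a full-dimensional simplex and therefore equal to $q$; in the affinely dependent case one re-expresses $p$ via a Radon-style perturbation along the affine relation among the vertices to zero out one coefficient.

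Once $\lambda_j=0$ is secured, the exchange step is clean. By hypothesis $q\in\conv(X_j)$, so $X_j$ cannot be entirely contained in $H^-$; pick $x_j'\in X_j$ strictly on the $q$-side of $H$. Form $T'=(T\setminus\{x_j\})\cup\{x_j'\}$; since $\lambda_j=0$, both $p$ and $x_j'$ lie in $\conv(T')$, hence so does the whole segment $[p,x_j']$. A direct computation of $\frac{d}{dt}\|p+t(x_j'-p)-q\|^2$ at $t=0$ gives $-2(q-p)\cdot(x_j'-p)<0$, so points on this segment near $p$ are strictly closer to $q$ than $p$ is, contradicting the minimality in the choice of $T$.

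The main obstacle is the face-selection step tucked into the second paragraph: ensuring that $p$ admits a convex representation with a vanishing coefficient. In generic position this is automatic, but in degenerate configurations one needs either a small perturbation of $T$ or an explicit use of the affine dependence among the $d+1$ vertices lying in $H$ to shift mass off one of them. Once this technicality is dispatched, the remaining argument is the standard nearest-point exchange described above, and the color class to be ``corrected'' at each step is precisely the one whose representative is expendable in the minimal-face expression of $p$.
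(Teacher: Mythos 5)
The paper does not prove this theorem; it is quoted as background with a citation to B\'ar\'any, so there is no in-paper argument to compare against. Your proposal is the standard B\'ar\'any nearest-point exchange proof of the contrapositive, and it is correct: the minimizing colorful tuple exists because $X$ is finite, the optimality condition for the projection $p$ forces every positively weighted vertex onto the supporting hyperplane $H$, the hypothesis $q\in\conv(X_j)$ supplies a replacement vertex strictly on the $q$-side of $H$, and the derivative computation shows the swap strictly decreases the distance, contradicting minimality.

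One small simplification for the step you flag as the ``main obstacle'': no case split or perturbation is needed. Once you know that every $x_i$ with $\lambda_i>0$ lies on $H$, those points sit in a $(d-1)$-dimensional affine flat, so the classical Carath\'eodory theorem applied inside $H$ lets you rewrite $p$ as a convex combination of at most $d$ of them; since $|T|=d+1$, some color $j$ is automatically expendable, uniformly in both the generic and degenerate configurations. With that observation your second paragraph collapses to one line and the rest of the argument goes through exactly as you wrote it.
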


\begin{theorem}[\textbf{Colorful Helly's theorem}]
Let $\mathcal{F}$ be a finite family of convex sets in $\mathbb{R}^d$, such that each set is assigned
one of $d+1$ colors. Suppose that every colorful tuple of convex sets in $\mathcal{F}$ has a non-empty intersection. Then
a color class of $\mathcal{F}$ has a non-empty intersection. 
\end{theorem}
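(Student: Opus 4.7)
The plan is to derive the Colorful Helly theorem from the Colorful Carath\'eodory theorem via linear programming duality, in the spirit of Lemma~\ref{lem:link}. The key tool is Farkas's lemma.

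First, by a standard compactness reduction, and by replacing each closed convex set in $\mathcal{F}_i$ with the closed halfspaces containing it (each inheriting the color of its parent set), I would reduce to the case where each $\mathcal{F}_i$ is a finite family of closed halfspaces. Writing $H = \{x \in \mathbb{R}^d : \langle a_H, x\rangle \le b_H\}$, associate the vector $u_H := (a_H, b_H) \in \mathbb{R}^{d+1}$. Farkas's lemma says that a finite intersection of halfspaces $\bigcap_H H$ is empty if and only if the distinguished vector $e := (0, \dots, 0, -1) \in \mathbb{R}^{d+1}$ lies in the conic hull of the $u_H$.

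With this dictionary, Colorful Helly translates into a statement about cones in $\mathbb{R}^{d+1}$: if $e$ lies in the conic hull of each of the $d+1$ color classes $\{u_H : H \in \mathcal{F}_i\}$, then $e$ lies in the conic hull of some colorful $(d+1)$-tuple. This is a conic version of Colorful Carath\'eodory in $\mathbb{R}^{d+1}$ with $d+1$ colors, which I would deduce from the convex-hull version stated in the excerpt by homogenization: choose a linear functional $c$ that is positive on $e$ and on every $u_H$, rescale each $u_H$ so that $\langle c, u_H\rangle = 1$, and work inside the $d$-dimensional affine hyperplane $\{y : \langle c, y \rangle = 1\}$. Inside that hyperplane, conic combinations of the original vectors become convex combinations of the rescaled ones, and Colorful Carath\'eodory applies to the rescaled point $e/\langle c, e\rangle$ and the $d+1$ classes of rescaled $u_H$.

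The main obstacle I expect is securing the normalizing functional $c$: in general the $u_H$ need not all lie in a common open halfspace through the origin. This is the usual Farkas subtlety and I would handle it either by a small perturbation of the $b_H$ followed by a compactness/limit argument, or by induction on $d$, peeling off low-dimensional degenerate sub-configurations (where many $u_H$ lie on a linear hyperplane through the origin and can be projected out) before applying the homogenization to the remaining generic part.
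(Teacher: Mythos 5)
First, note that the paper itself does not prove this statement: the Colorful Helly theorem appears only as background, attributed to Lov\'asz, so there is no in-paper proof to compare against. Your overall strategy is the classical one (it is essentially how Lov\'asz/B\'ar\'any derive Colorful Helly from Colorful Carath\'eodory): replace each convex set by the halfspaces containing it, pass by Farkas duality to vectors $u_H=(a_H,b_H)\in\mathbb{R}^{d+1}$, and observe that ``$\bigcap\mathcal{F}_i=\emptyset$ for every $i$'' becomes ``$e=(0,\dots,0,-1)$ lies in the conic hull of every color class,'' while the desired conclusion becomes ``$e$ lies in the conic hull of some colorful transversal.'' The reduction of the convex sets to finitely many halfspaces, the Farkas dictionary, and the final descent from a rainbow tuple of halfspaces to a rainbow tuple of convex sets are all fine.

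The genuine gap is exactly the one you flag: the \emph{conic} Colorful Carath\'eodory theorem in $\mathbb{R}^{d+1}$ is the whole content of the argument, and your proposed derivation of it from the affine version does not go through. The normalizing functional $c$ with $\langle c,u_H\rangle>0$ for all $H$ typically does not exist here: a compact convex set of positive width is contained in halfspaces with normals $a$ and $-a$, so the relevant vector configurations contain (near-)antipodal pairs and can positively span a nontrivial subspace; indeed each class's cone is forced to contain the full line through $e$ once it contains both $e$ and sums like $(a,h_C(a))+(-a,h_C(-a))=(0,\text{width})$. No perturbation of the $b_H$ (nor of all coordinates) removes this obstruction, since positively spanning configurations remain positively spanning under small perturbations, and the ``peel off degenerate sub-configurations'' induction is not carried out; as written, the proof of the key step is missing. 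The standard fix is not to reduce to the affine statement at all but to prove the conic colorful Carath\'eodory directly, by the same swapping argument B\'ar\'any uses in the affine case: among all colorful transversals choose one minimizing the distance from $e$ to its conic hull; if that distance is positive, separate $e$ from the cone by a hyperplane through the origin, note that each color class must contain a vector strictly on the $e$-side (since $e$ lies in each class's conic hull), and swap that vector into the transversal to strictly decrease the distance. With that lemma in hand, the rest of your outline is correct.
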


The Colorful Helly's theorem was first proved by Lov\'asz(see~\cite{colorful_caratheodory});
the Colorful Carath\'edory's theorem was proved by B\'ar\'any.
A stronger version of the Colorful Carath\'edory theorem was proven independently by Holmsen, Pach, and Tverberg~\cite{points_surrounding},
and by Arocha, B\'ar\'any, Bracho, Fabila-Monroy and Montejano~\cite{very_colorful}.
In that version the hypothesis is the same and the conclusion  is that there exist
two color classes such that $q$ is not contained in the convex hull of their union. An even
stronger version was proved by Meunier and Deza~\cite{further_colorful}.

In the fractional versions, the hypothesis is that a large number of the $(d+1)$-tuples
satisfies the same property as the classical version and the conclusion is that a large
subfamily satisfies the property. The fractional version of Helly's theorem is as follows.

\begin{theorem}[\textbf{Fractional Helly's theorem~\cite{fractional_helly}}]
 For every $\alpha > 0$ and every dimension $d \ge 1$ there exists a constant $\beta=\beta(\alpha,d)>0$
  such that the following holds. Let $\mathcal{F}$ be a family
 of $n$ convex sets in $\mathbb{R}^d$. Suppose that at least $\alpha \binom{n}{d+1}$ of the 
 $(d+1)$-tuples of sets in $\mathcal{F}$ has a non-empty intersection. Then there exists a subfamily
 $\mathcal{F}' \subset \mathcal{F}$ of at least $\beta n$ sets with a non-empty intersection.
\end{theorem}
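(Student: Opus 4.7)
The plan is to encode the hypothesis combinatorially and reduce to a question about dense cliques in a hypergraph, which is then resolved using Helly's theorem itself. Let $H$ be the $(d+1)$-uniform hypergraph on vertex set $\mathcal{F}$ whose edges are the $(d+1)$-tuples with non-empty common intersection; by hypothesis $|E(H)|\ge \alpha\binom{n}{d+1}$. The key observation, which is where Helly enters, is that if a subfamily $\mathcal{F}'\subseteq\mathcal{F}$ has the property that every one of its $(d+1)$-subsets is an edge of $H$, then by Helly's theorem $\bigcap\mathcal{F}'\neq\emptyset$. So the task reduces to finding a clique of size $\beta n$ in $H$.

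To produce such a clique I would use a bootstrapping argument driven by Helly. Call a $k$-subset of $\mathcal{F}$ \emph{Helly-good} if every one of its $(d+1)$-subsets is an edge of $H$; by Helly, a Helly-good subset already has non-empty common intersection. Starting from the $\alpha\binom{n}{d+1}$ Helly-good $(d+1)$-subsets given by the hypothesis, I would propagate density by double-counting: in a dense collection of Helly-good $k$-subsets many pairs share a common $(k-1)$-face, and a Kruskal--Katona-type shadow inequality then shows that joining such pairs produces many Helly-good $(k+1)$-subsets. Iterating until $k=\beta n$ would deliver a Helly-good subfamily of the required size, whose common intersection is non-empty by Helly.

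The main obstacle is keeping the loss at each iteration small enough that $k$ grows linearly in $n$ rather than merely logarithmically: random $(d+1)$-uniform hypergraphs of density $\alpha$ contain only logarithmic-size cliques, so the convex-geometric input (Helly) must be used to upgrade partial structure into additional edges at every step, not only at the very end. A cleaner conceptual route would be to observe that Helly's theorem is exactly the statement that the nerve of $\mathcal{F}$ is a $d$-Leray simplicial complex, and then to invoke Kalai's upper bound theorem for $d$-Leray complexes, which directly asserts that $\alpha\binom{n}{d+1}$ top faces force a simplex of size $\beta n$. I would pursue the $d$-Leray route if that machinery is acceptable, and otherwise push through the hand bootstrapping; in either case the dimension-independent linear dependence on $n$ is the delicate part.
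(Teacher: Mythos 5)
First, note that the paper does not prove this statement at all: it is quoted as a known result of Katchalski and Liu, so there is no in-paper proof to match. Judged on its own, your proposal has a genuine gap in its main route. Reducing to ``find a clique of size $\beta n$ in the $(d+1)$-uniform hypergraph $H$ of intersecting tuples'' is already a strategically wrong reduction: as you yourself observe, density $\alpha$ alone forces only logarithmic cliques, so everything hinges on the propagation step, and that step as written does not work. Kruskal--Katona controls \emph{shadows} (it bounds the number of $(k-1)$-faces below a given collection of $k$-faces); it gives no upward conclusion. More importantly, if $A$ and $B$ are Helly-good $k$-sets sharing a $(k-1)$-subset, the $(d+1)$-subsets of $A\cup B$ that contain both the element of $A\setminus B$ and the element of $B\setminus A$ need not be edges of $H$, and Helly gives you nothing here because $\bigcap A\neq\emptyset$ and $\bigcap B\neq\emptyset$ do not imply $\bigcap(A\cup B)\neq\emptyset$. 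Already for $d=2$ one can have four convex sets with $\{1,2,3\}$ and $\{1,2,4\}$ both intersecting while $\{3,4\}$ is disjoint, so the ``joined'' $4$-set is not Helly-good. You have named the obstacle but not supplied the idea that overcomes it.

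The missing idea in the standard (Katchalski--Liu) proof is a pigeonhole on witness points rather than a clique search. For an intersecting family $G$, let $a(G)$ be the lexicographically minimal point of $\bigcap G$. A Helly-type lemma shows that for every intersecting $(d+1)$-tuple $I$ there is a $d$-element subset $J\subset I$ with $a(J)=a(I)$; in particular $a(J)$ lies in the one remaining set of $I$. Since each $d$-subset $J$ is contained in at most $n-d$ tuples, some fixed $J$ arises this way for at least $\alpha\binom{n}{d+1}/\binom{n}{d}\ge\frac{\alpha}{d+1}(n-d)$ tuples $I$, and the single point $a(J)$ then lies in all the corresponding extra sets (and in the $d$ sets of $J$), giving $\beta\approx\alpha/(d+1)$. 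No clique is ever constructed. Your alternative route via Kalai's upper bound theorem for $d$-Leray complexes (together with the fact that nerves of convex families are $d$-collapsible, hence $d$-Leray) is correct and even yields the optimal $\beta=1-(1-\alpha)^{1/(d+1)}$, but it is a citation of a theorem that contains fractional Helly as a special case rather than a proof; if that is the intended argument you should say so and drop the bootstrapping sketch, which cannot be repaired as stated.
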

The Fractional Helly theorem was proved by Katchalski and Liu.

It is natural to ask whether a Fractional Carath\'eodory's theorem exists.
Interestingly enough, although the statement is true and well known, to the best of our knowledge there is no explicit 
mention of the Fractional Carath\'eodory's theorem in the literature.
The Fractional Carath\'eodory's theorem is just the relationship between the Tukey and the simplicial
 depth defined in Section~\ref{sec:center_point}. 
The statement is the following.

\begin{theorem}[\textbf{Fractional Carath\'eodory's theorem}]
For every $\alpha > 0$ and every dimension $d \ge 1$ there exists a constant $\beta=\beta(\alpha,d)>0$
  such that the following holds.  Let $X$ be a set of $n$ points in $\mathbb{R}^d$.
  Suppose that $q$ is a point such that at least $\alpha \binom{n}{d+1}$ of the $(d+1)$-tuples of points
  do not contain $q$ in their convex hull. Then there exists a subset $X'\subset X$ of at least
  $\beta n$ points such that its convex hull does not contain $q$.
\end{theorem}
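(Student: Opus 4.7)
The plan is to obtain the Fractional Carath\'eodory theorem as a direct dual consequence of the Fractional Helly theorem, using Lemma~\ref{lem:link} as the dictionary between convex-hull statements about $X$ and intersection statements about an associated family of halfspaces.

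First I would, for each point $x_i \in X$, introduce the halfspace $H_q(x_i)$ from Lemma~\ref{lem:link} (the halfspace on the opposite side of $x_i$ from $q$, bounded by the hyperplane through $x_i$ perpendicular to the segment $x_iq$), and let $\mathcal{F}=\{H_q(x_1),\dots,H_q(x_n)\}$. Note this is a family of $n$ convex sets in $\mathbb{R}^d$. By Lemma~\ref{lem:link} applied to any subset $\{x_{i_1},\dots,x_{i_k}\}$ of $X$, the convex hull of this subset fails to contain $q$ if and only if the halfspaces $H_q(x_{i_1}),\dots,H_q(x_{i_k})$ have a non-empty common intersection. In particular, the hypothesis that at least $\alpha\binom{n}{d+1}$ of the $(d+1)$-tuples of $X$ do not contain $q$ in their convex hull translates into: at least $\alpha\binom{n}{d+1}$ of the $(d+1)$-tuples of sets in $\mathcal{F}$ have non-empty intersection.

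Next I would apply the Fractional Helly theorem to $\mathcal{F}$ with the same parameter $\alpha$, obtaining a constant $\beta=\beta(\alpha,d)>0$ and a subfamily $\mathcal{F}'\subset\mathcal{F}$ of size at least $\beta n$ with $\bigcap \mathcal{F}' \neq \emptyset$. Let $X' \subset X$ be the set of points $x_i$ corresponding to the halfspaces chosen in $\mathcal{F}'$; then $|X'| \geq \beta n$ and, by the other direction of Lemma~\ref{lem:link}, the non-emptiness of $\bigcap \mathcal{F}'$ is exactly equivalent to $q \notin \conv(X')$. This yields the desired subset.

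There is no real obstacle here: once Lemma~\ref{lem:link} is in hand, the statement is genuinely a reformulation of Fractional Helly in the dual picture (points versus halfspaces pointing away from $q$), which is precisely the sense in which the paper remarks that it is ``just the relationship between Tukey and simplicial depth.'' The only mild point to note is that one should observe that the constant $\beta$ is inherited unchanged from Fractional Helly, so the same $\beta(\alpha,d)$ works here.
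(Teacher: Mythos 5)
Your proof is correct. The dictionary of Lemma~\ref{lem:link} turns ``the $(d+1)$-tuple does not contain $q$ in its convex hull'' into ``the corresponding $(d+1)$ halfspaces $H_q(x_i)$ have a common point,'' and ``$q\notin\conv(X')$'' into ``$\bigcap\mathcal{F}'\neq\emptyset$,'' so the statement really is the Fractional Helly theorem read off in this dual picture, with $\beta(\alpha,d)$ inherited unchanged. (Two trivial housekeeping points: $H_q(x_i)$ is undefined if some $x_i=q$, so one should discard $q$ from $X$ first --- every tuple through $q$ contains $q$ in its hull anyway --- and the map $x_i\mapsto H_q(x_i)$ is injective, since $x_i$ is the point of the bounding hyperplane nearest to $q$, so point-tuples and set-tuples correspond bijectively.) This is, however, not quite the route the paper indicates: the paper gives no written proof and instead asserts that the theorem ``is just the relationship between the Tukey and the simplicial depth.'' In that reading, the hypothesis is equivalent to $\sigma_X(q)\le 1-\alpha$ and the conclusion to $\tau_X(q)\le 1-\beta$ (a subset of $\beta n$ points whose hull avoids $q$ exists if and only if some closed halfspace containing $q$ meets at most $(1-\beta)n$ points of $X$), so the theorem becomes the implication $\tau_X(q)>1-\beta\Rightarrow\sigma_X(q)>1-\alpha$. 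The specific inequalities of Afshani and Wagner quoted in Section~\ref{sec:center_point} do not immediately give this implication in the relevant regime where both depths are close to $1$, so the paper's one-line justification is really a sketch. Your argument, by contrast, is complete, uses only results already stated in the paper, and mirrors exactly the mechanism the paper itself employs in Section~\ref{sec:helly_depth} to pass between the Depth Helly and Depth Carath\'eodory theorems; as a bonus it yields an explicit constant, namely whatever Fractional Helly provides (e.g.\ $\beta=1-(1-\alpha)^{1/(d+1)}$ from Kalai's sharp version).
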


 For recent variations and related results of Helly's theorem, see the survey of Amenta, De Loera, and Sober{\'o}n \cite{survey_helly_1}, and
 the survey by Holmsen and Wenger~\cite{survey_helly_2}.

\subsection{Centerpoint Theorems and Depth}\label{sec:center_point}

Let $X \subset \mathbb{R}^d$ be a set of $n$ points.
There are many formalizations of the notion 
of how deep a point $q \in \mathbb{R}^d$ is inside $\conv(X)$. We use two of them in this paper; 
they are defined as follows.

\begin{itemize}
 \item The \emph{Tukey depth} of $q$ with respect to $X$, is the minimum number of points of $X$
in every closed halfspace that contains $q$; we denote it by $\tilde{\tau}_X(q)$.

 \item  The \emph{simplicial depth} of $q$ with respect to $X$,  is the number
 of distinct subsets of $d+1$ points $x_1,\dots,x_{d+1}$ of $X$, such that $q$
 is contained in $\conv(\{x_1, \dots, x_{d+1}\})$; we denote it by $\tilde{\sigma}_X(q)$.
\end{itemize}
The word ``simplicial'' in the latter definition comes from the fact that $q$ 
is contained in the simplex with vertices $x_1,\dots,x_{d+1}$. In other words, $\tilde{\sigma}_X(q)$
is the number of simplices with vertices in $X$ that contain $q$ in their interior.
The two definitions are not equivalent (one does not determine the other). They are however related;
Afshani~\cite{afshani} has shown that $\Omega(n \tilde{\tau}_X(q)^d) \le \tilde{\sigma}_X(q) \le O(n^d \tilde{\tau}_X(q))$,
and that these bounds are attainable; Wagner~\cite{wagner} has shown a tighter lower bound of 
\[\tilde{\sigma}_X(q)\ge \frac{(d+1)\tilde{\tau}_X(q)^d n-2d\tilde{\tau}_X(q)^{d+1}}{(d+1)!}-O(n^d).\]

Both definitions have generalizations to Borel probability measures on $\mathbb{R}^d$;
let $\mu$ be such a measure.

\begin{itemize}
 \item The \emph{Tukey depth} of $q$ with respect to $\mu$, is the minimum of $\mu(H)$
 over all closed halfspaces $H$ that contain $q$; we denote it by $\tau_{\mu}(q)$.

 \item  The \emph{simplicial depth} of $q$ with respect to $\mu$,  is the probability that
 $q$ is in the convex hull of $d+1$ points chosen randomly and independently with distribution $\mu$;
 we denote it by $\sigma_\mu(q)$.
\end{itemize}

The Tukey depth was introduced by Tukey~\cite{tukey} and the simplicial depth by Liu~\cite{liu}.
Both definitions aim to capture how deep a point is inside a data 
set. As a result they have been widely used in statistical analysis.
For more information on these applications and other definitions of depth, see: The book
edited by Liu, Serfling and Souvaine~\cite{data_depth_book};  the survey by
Rafalin and Souvaine~\cite{survey_depth}; and the monograph by Mosler~\cite{mosler}.

For the purpose of this paper, we join the definitions for point sets and for probability measures by setting
$\tau_X(q):=\tilde{\tau}_X(q)/n$ and $\sigma_X(q):=\tilde{\sigma}_X(q)/\binom{n}{d+1}$.
We refer to them as the Tukey and simplicial depth of point $q$ with respect to $X$, respectively.
(Alternatively, note that a Borel probability measure is obtained from $X$ by defining
the measure of an open set $A \subset \mathbb{R}^d$ to be $|A\cap X|/n$.)
Throughout the paper, for exposition purposes, we present the proofs of our results for sets of points
rather than for Borel probability measures. However, we explicitly mention when such results
also hold for Borel probability measures.

Long before the concept of Tukey depth came about (1975), a point of large
Tukey depth was shown to always exists.  This was first proved
in the plane by Neumann~\cite{neumann} in 1945. Using Helly's theorem, Rado~\cite{rado} generalized this result to higher dimensions
in 1947. We rephrase this theorem in terms of the Tukey depth as follows. 

\begin{theorem}[\textbf{Centerpoint theorem for Tukey depth}] \label{thm:center_point}
 Let $X$ be a finite set of points in $\mathbb{R}^d$ (or a Borel probability measure
 on $\mathbb{R}^d$). Then there exists a point $q$ such that $\tau_X(q)\ge \frac{1}{d+1}$. 
\end{theorem}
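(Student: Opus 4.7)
The plan is to apply Helly's theorem to a suitable family of closed halfspaces, then use a small perturbation to turn the output into a centerpoint. Let $n = |X|$ and define
\[
\mathcal{F} = \bigl\{\, H \subset \mathbb{R}^d : H \text{ is a closed halfspace with } |X \cap H| > \tfrac{d}{d+1}n \,\bigr\}.
\]
The goal is to show $\bigcap \mathcal{F} \neq \emptyset$ and that any $q$ in this intersection satisfies $\tau_X(q) \ge \frac{1}{d+1}$.

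First I would verify the hypothesis of Helly's theorem. Given any $H_1,\dots,H_{d+1} \in \mathcal{F}$, each complementary open halfspace $H_i^c$ satisfies $|X \cap H_i^c| < \frac{n}{d+1}$, so
\[
\bigl|X \cap (H_1^c \cup \cdots \cup H_{d+1}^c)\bigr| < (d+1)\cdot\tfrac{n}{d+1} = n,
\]
and therefore some point of $X$ lies in $\bigcap_i H_i$, which is thus non-empty. Next, since the family $\mathcal{F}$ is infinite and its members are unbounded, Helly's theorem does not apply directly; I would reduce to a finite subfamily by observing that there are only finitely many combinatorially distinct halfspaces with respect to $X$ (equivalently, intersect each $H \in \mathcal{F}$ with a large closed ball $B \supset \conv(X)$ and use the standard compactness version of Helly's theorem on the family $\{H \cap B : H \in \mathcal{F}\}$). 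This yields a point $q \in \bigcap \mathcal{F}$.

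It remains to upgrade $q$ to a centerpoint. Suppose for contradiction that $\tau_X(q) < \frac{1}{d+1}$, so there is some closed halfspace $H_0$ with $q \in H_0$ and $|X \cap H_0| < \frac{n}{d+1}$. Translate the bounding hyperplane of $H_0$ by $\varepsilon > 0$ in the outward direction, producing a closed halfspace $H_0^\varepsilon \supset H_0$ with $q$ in its interior; because $X$ is finite, for sufficiently small $\varepsilon$ no new points of $X$ enter, so $|X \cap H_0^\varepsilon| = |X \cap H_0| < \frac{n}{d+1}$. The opposite closed halfspace $H_0'$ then satisfies $q \notin H_0'$ and $|X \cap H_0'| \ge n - |X \cap H_0^\varepsilon| > \frac{dn}{d+1}$, so $H_0' \in \mathcal{F}$, contradicting $q \in \bigcap \mathcal{F}$.

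The main obstacle is the move from the clean counting argument to a genuine application of Helly, since $\mathcal{F}$ is infinite and its members are unbounded; handling this via compactness or a finite combinatorial reduction is the only delicate point. For the Borel measure version the argument is formally identical, with $|X \cap H|/n$ replaced by $\mu(H)$; the only change is that the perturbation step uses continuity of $\mu$ from above (so that $\mu(H_0^\varepsilon) \to \mu(H_0)$ as $\varepsilon \to 0$) in place of the finiteness of $X$.
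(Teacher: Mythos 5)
The paper itself gives no proof of this theorem; it records it as a classical result and attributes the general-dimensional proof via Helly's theorem to Rado, which is precisely the argument you give. For a finite point set your proof is correct and complete: the counting step verifying the $(d+1)$-wise intersection property, the reduction to compact convex sets by intersecting with a ball $B \supset \conv(X)$ followed by the finite-intersection-property version of Helly, and the $\varepsilon$-translation at the end are all sound. The only caveat is your closing claim that the Borel-measure case is ``formally identical'': there the compactification step does not carry over directly, since a measure may have unbounded support and a $(d+1)$-wise intersection of halfspaces of measure greater than $\tfrac{d}{d+1}$ is only guaranteed to have positive measure, which does not by itself force it to meet a fixed ball; the standard repair is to approximate $\mu$ weakly by finite point sets, take centerpoints of the approximants (which remain in a fixed compact region, because any point lying in a closed halfspace of measure much less than $\tfrac{1}{d+1}$ cannot be a near-centerpoint), and pass to a limit.
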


A point satisfying Theorem~\ref{thm:center_point} is called
a \emph{centerpoint}. The bound of Theorem~\ref{thm:center_point} is tight;
there exist sets of $n$ points in $\mathbb{R}^d$ such that no point of $\mathbb{R}^d$
has Tukey depth larger than $\frac{1}{d+1}$ with respect to these point sets.

The preceding of a centerpoint theorem before the definition of Tukey
depth also occurred with the definition of simplicial depth (1990). Boros and F\"uredi~\cite{boros} proved
in $1984$ that if $X$ is a set of $n$ points in the plane, then
there exists a point $q$ contained in at least $\frac{2}{9}\binom{n}{3}$ of the 
triangles with vertices on $X$. This was generalized to higher dimensions
by B\'ar\'any~\cite{colorful_caratheodory} in 1982. This result also holds for Borel
probability measures (see Wagner's PhD thesis~\cite{wagner}); we rephrase it
in terms of the simplicial depth as follows.

\begin{theorem}[\textbf{Centerpoint theorem for simplicial depth}] \label{thm:center_point_simplicial}
 Let $X$ be a finite set of points in $\mathbb{R}^d$(or a Borel probability measure
 on $\mathbb{R}^d$). Then there exists a point $q$  and a constant $c_d >0$ (depending
 only on $d$) such that $\sigma_X(q)\ge c_d$. 
\end{theorem}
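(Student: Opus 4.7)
The plan is to combine three tools from the preceding sections: the Centerpoint Theorem for Tukey depth (Theorem~\ref{thm:center_point}), the Link Lemma (Lemma~\ref{lem:link}), and the Fractional Helly theorem. The idea is that the very point $q$ produced by Theorem~\ref{thm:center_point} already has simplicial depth bounded below by a constant depending only on $d$. So I would start by applying Theorem~\ref{thm:center_point} to pick $q$ with $\tau_X(q)\ge 1/(d+1)$, and without loss of generality assume $q\notin X$.

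Next, Lemma~\ref{lem:link} supplies the key translation: a $(d+1)$-subset $\{x_{i_0},\dots,x_{i_d}\}\subseteq X$ does \emph{not} contain $q$ in its convex hull if and only if the halfspaces $H_q(x_{i_0}),\dots,H_q(x_{i_d})$ have nonempty common intersection. Setting $\mathcal{F}=\{H_q(x):x\in X\}$, lower-bounding $\sigma_X(q)$ becomes upper-bounding the number of intersecting $(d+1)$-subfamilies of $\mathcal{F}$.

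I would then argue by contraposition. Suppose at least $\alpha\binom{n}{d+1}$ of the $(d+1)$-subsets of $X$ are bad, equivalently that $\alpha\binom{n}{d+1}$ of the $(d+1)$-subfamilies of $\mathcal{F}$ intersect. By Fractional Helly, there is a subfamily $\mathcal{F}'\subseteq\mathcal{F}$ of size $\ge \beta(\alpha,d)\,n$ with a common point $p^{*}$. The inequality $(p^{*}-q)\cdot(x-q)\ge |x-q|^{2}>0$ (read off from $p^{*}\in H_q(x)$, as in the proof of Lemma~\ref{lem:link}) forces every $x$ indexing $\mathcal{F}'$ into the open halfspace $\{y:(y-q)\cdot(p^{*}-q)>0\}$, whose closed complement contains $q$; the centerpoint property then yields $|\mathcal{F}'|\le dn/(d+1)$. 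A contradiction follows whenever $\beta(\alpha,d)>d/(d+1)$, so setting $\alpha^{*}=\inf\{\alpha:\beta(\alpha,d)>d/(d+1)\}<1$ gives $\sigma_X(q)\ge 1-\alpha^{*}=:c_d$.

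The main obstacle is the quantitative behavior of the Fractional Helly constant: the argument requires $\beta(\alpha,d)\to 1$ as $\alpha\to 1$, which is sharper than the bare positivity stated in the excerpt. This is, however, available in the literature (e.g., Kalai's sharp bound $\beta=1-(1-\alpha)^{1/(d+1)}$), so it suffices to invoke the quantitative version. The extension to a Borel probability measure $\mu$ is then automatic, since Theorem~\ref{thm:center_point}, Lemma~\ref{lem:link}, and the Fractional Helly theorem all carry over to the measure-theoretic setting verbatim.
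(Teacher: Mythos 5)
The paper does not actually prove this theorem: it is stated as a classical result (the ``First Selection Lemma''), attributed to Boros--F\"uredi in the plane and to B\'ar\'any in general, whose standard proof runs through Tverberg's theorem plus the Colorful Carath\'eodory theorem. Your argument is a genuinely different, and essentially correct, derivation: take a Tukey centerpoint $q$, translate ``$(d+1)$-subsets whose hull misses $q$'' into ``intersecting $(d+1)$-subfamilies of the halfspaces $H_q(x)$'' via Lemma~\ref{lem:link}, and then use Fractional Helly to show that too many such bad subsets would produce a large subfamily with a common point $p^{*}$, forcing $\ge \beta n$ points of $X$ strictly into the open halfspace $\{y:(y-q)\cdot(p^{*}-q)>0\}$ and hence contradicting $\tau_X(q)\ge 1/(d+1)$ once $\beta>d/(d+1)$. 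Note that your middle step is exactly the paper's ``Fractional Carath\'eodory's theorem,'' so your proof amounts to: Fractional Carath\'eodory plus the Tukey centerpoint theorem. You correctly identify the one real dependency: the qualitative Fractional Helly statement ($\beta>0$) is useless here, and you need $\beta(\alpha,d)\to 1$ as $\alpha\to 1$; Kalai's sharp bound $\beta=1-(1-\alpha)^{1/(d+1)}$ supplies this and, pleasingly, yields $c_d\ge (d+1)^{-(d+1)}$, matching B\'ar\'any's original constant. What each approach buys: the Tverberg route is self-contained within ``colorful'' machinery and gives the simplices explicitly as rainbow simplices of a Tverberg partition, while yours shows the stronger statement that \emph{every} point of Tukey depth $\ge 1/(d+1)$ (indeed, a quantitative Tukey-to-simplicial depth implication in the spirit of the Afshani/Wagner bounds quoted in Section~\ref{sec:center_point}) already works, at the price of importing a quantitative Fractional Helly theorem.

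Three small points to tidy up. First, distinct points $x,x'\in X$ may define the same halfspace, so $\mathcal{F}=\{H_q(x):x\in X\}$ should be treated as an indexed family (multiset) when counting intersecting $(d+1)$-subfamilies; Fractional Helly is insensitive to repetitions, so this is cosmetic. Second, the reduction needs $x\neq q$ for $H_q(x)$ to be defined; discarding $q$ from $X$ if necessary changes the counts by $O(n^{d})$ out of $\binom{n}{d+1}$, which is harmless. Third, the claim that everything ``carries over verbatim'' to a Borel probability measure is too quick --- Fractional Helly is a statement about finite families --- but the measure case follows from the finite case by sampling $d+1$ independent points and applying the finite bound to the empirical distribution (or one simply cites Wagner's thesis, as the paper does).
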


Theorem~\ref{thm:center_point_simplicial} has been named the ``First Selection Lemma'' by
Matou\v{s}ek~\cite{lectures}.
In contrast with Theorem~\ref{thm:center_point} the exact value of $c_d$ (for values of $d$ greater
than $2$) is far from known. In the case of when $X$ is a point set, the search for better  bounds for $c_d$ 
has been an active area of research. The current best upper bound (for point sets) is $c_d \le \frac{(d+1)!}{(d+1)^{d+1}}$.
This was proven by Bukh, Matou\v{s}ek and Nivasch~\cite{stabbing}; they conjecture
that this is the exact value of $c_d$. As for the lower bound B\'ar\'any's original proof yields
$c_d\ge \frac{1}{(d+1)^{d+1}}$. Using topological methods Gromov~\cite{gromov} has significantly improved
this bound to $c_d \ge \frac{2d}{(d+1)!(d+1)}$. Shortly after, Karasev~\cite{karasev} provided
a simpler version of Gromov's proof, but still using topological methods. Matou\v{s}ek and Wagner~\cite{on_gromov}
gave an expository account of the combinatorial part of Gromov's proof; they also
slightly improved Gromov's bound and showed limitations on his method.

\section{Projection Tukey Depth}\label{sec:projection}

Along the way of proving the Depth Carath\'eodory's theorem we prove a result (Theorem~\ref{thm:center_point_projection}) similar in spirit
to the centerpoint theorems above. Taking a lesson from history we first define a notion
of depth and then phrase our result accordingly. 

Assume that $d \ge 2$ and let $X$ be a set of $n$ points in $\mathbb{R}^d$. Let $q$ be a point of $\mathbb{R}^d\setminus X$. Given a  point $p \in \mathbb{R}^d$ distinct from $q$, 
let $r(p,q)$ be the infinite ray passing through $p$ and with apex $q$. 
For a set $A\subset \mathbb{R}^d$, not containing $q$, let $R(A,q):=\{r(p,q): p \in A\}$ be the set of rays with apex $q$ and passing through
a point of $A$. Let $\Pi$ be an oriented hyperplane containing $q$, and 
let $\Pi^+$ and $\Pi^-$ be  two hyperplanes, parallel to $\Pi$, strictly above and below $\Pi$, respectively.
Let $X^+:=\Pi^+\cap R(X,q)$ and $X^-:=\Pi^-\cap R(X,q)$. Let $L(q)$ be the set of straight lines
that contain $q$. See Figure~\ref{fig:proy}.

We define the \emph{projection Tukey depth} of $q$ with respect to $\Pi$ and $X$, to be
\[\pi_{X,\Pi}(q):=\max\{\min\{\tau_{X^+}(\ell \cap \Pi^+),\tau_{X^-}(\ell \cap \Pi^-)\}: \ell \in L(q)\}.\] 
Intuitively if $q$ has large projection Tukey depth with respect to $\Pi$ and $X$, 
then there exists a direction in which $q$ can be projected
to $\Pi^+$ and $\Pi^-$, such that the images of $q$ have large Tukey depth with respect to $X^+$ and $X^-$.  See Figure~\ref{fig:proy}.
Finally we define the \emph{projection Tukey depth} of $q$ with respect to $X$, as the minimum of this value over all $\Pi$. That
is, \[\pi_X(q):=\min\{\pi_{X,\Pi}(q): \Pi \textrm{ is a hyperplane containing } q\}.\]

\begin{figure}
  \begin{center}
   \includegraphics[width=0.8\textwidth]{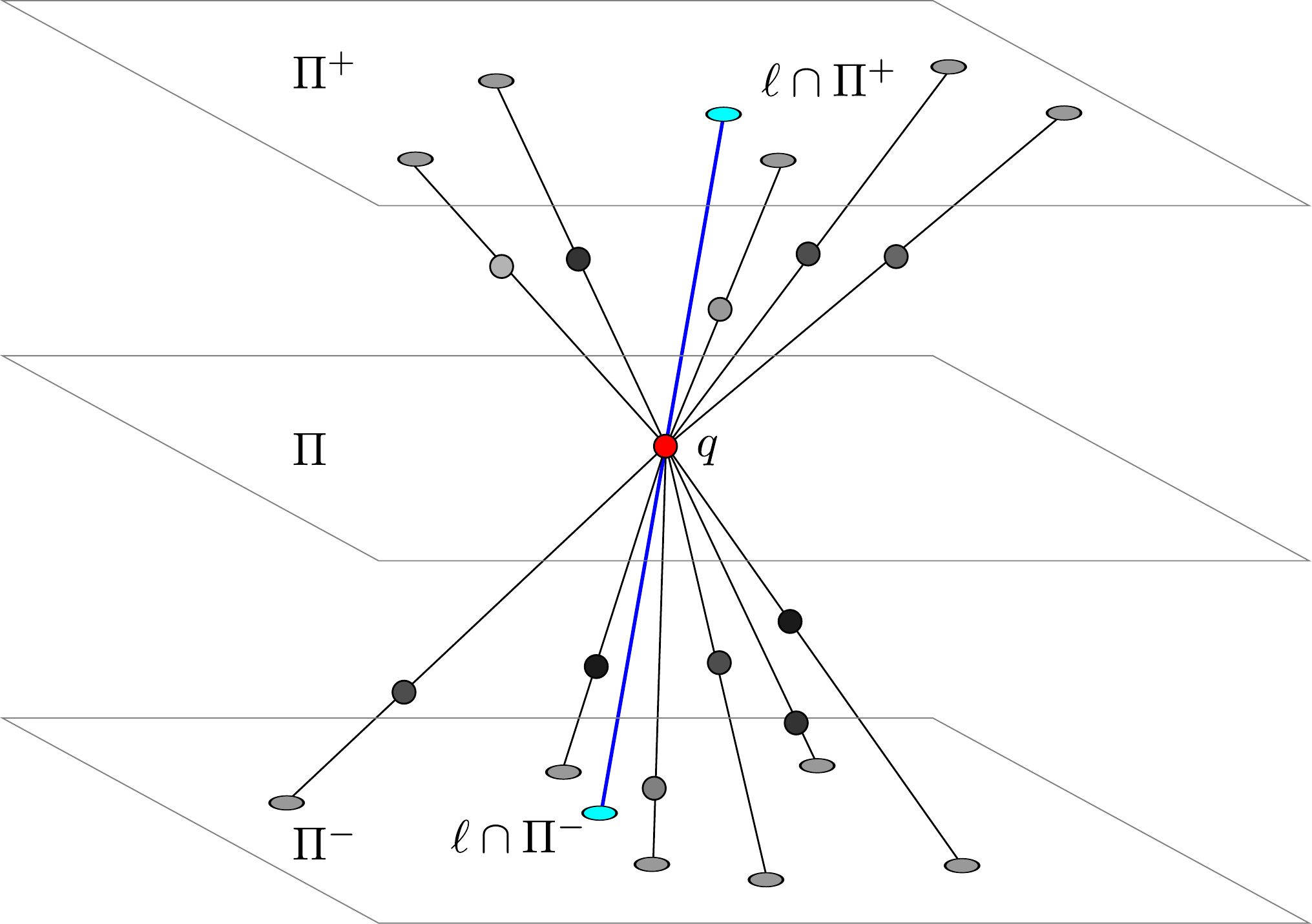}
\end{center}
\caption{Depiction of $q$, $X$, $X^+$, $X^-$, $\Pi$, $\Pi^+$, and  $\Pi^-$. }
\label{fig:proy}
\end{figure}

To provide the definition for Borel probability measures, we use $\mu$ to define
Borel measures $\mu^+$ and $\mu^-$ on $\Pi^+$ and $\Pi^-$, respectively. Let 
$U$ and $D$ be the set of points of $\mathbb{R}^d$ that lie above and below
$\Pi$, respectively. Let $\mu^+(A):=\mu(R(A,q))/\mu(U)$ for sets $A\subset \Pi^+$,
and $\mu^-(A):=\mu(R(A,q))/\mu(D)$ for sets $A\subset \Pi^-$. The \emph{projection Tukey depth} of $q$ with respect to
$\Pi$ and $\mu$, and the \emph{projection Tukey depth} of $q$ with respect to $\mu$ are defined respectively as
\[\pi_{\mu,\Pi}(q):=\max\{\min\{\tau_{\mu^+}(\ell \cap \Pi^+),\tau_{\mu^-}(\ell \cap \Pi^-)\}: \ell \in L(q)\}\textrm{ and}\]
\[\pi_\mu(q):=\min\{\pi_{\mu,\Pi}(q): \Pi \textrm{ is a hyperplane containing } q\}.\]

The following lemma lower bounds the projection Tukey depth of $q$  in terms of its Tukey depth.

\begin{lemma}\label{lem:projection}
 Let $X$ be a set of $n$ points in $\mathbb{R}^d$ (or a Borel probability measure
 on $\mathbb{R}^d$) and let $q$ be a point in $\mathbb{R}^d$. Then $\pi_X(q) \ge \min\{\tau_X(q),\frac{1}{d}\}$. 
\end{lemma}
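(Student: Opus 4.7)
The plan is to fix an arbitrary hyperplane $\Pi$ through $q$ and to exhibit a line $\ell \in L(q)$ with $\min\{\tau_{X^+}(\ell\cap\Pi^+),\tau_{X^-}(\ell\cap\Pi^-)\}\ge s:=\min\{\tau_X(q),\,1/d\}$; since $\Pi$ is arbitrary, taking the minimum over $\Pi$ then gives $\pi_X(q)\ge s$ and proves the lemma. I would begin by placing coordinates so that $q=0$ and $\Pi=\{x_d=0\}$, and identifying both $\Pi^+=\{x_d=1\}$ and $\Pi^-=\{x_d=-1\}$ with a common copy of $\mathbb R^{d-1}$ via central projection $x\mapsto (x_1/x_d,\ldots,x_{d-1}/x_d)$ (for $\Pi^-$ the identification incorporates the antipodal map, so that both $X^+$ and $X^-$ sit in the same $\mathbb R^{d-1}$). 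Under this identification, a line $\ell\in L(q)$ not contained in $\Pi$ corresponds to a single point $p\in\mathbb R^{d-1}$, and $\ell\cap\Pi^+$ and $\ell\cap\Pi^-$ are both represented by $p$.

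Next I would translate the hypothesis $\tau_X(q)=t$ into a coupling between $X^+$ and $X^-$. A hyperplane $\tilde H\subset\mathbb R^d$ through $q$ projects to a hyperplane $L\subset\mathbb R^{d-1}$, and a short computation shows that, thanks to the sign switch in the antipodal identification, the closed halfspace $\tilde H^+$ corresponds to the union $(L^+\cap X^+)\sqcup(L^-\cap X^-)$, with $L^\pm$ the complementary closed halfspaces of $\mathbb R^{d-1}$. The Tukey hypothesis therefore yields
\[
|L^+\cap X^+|+|L^-\cap X^-|\ \ge\ tn \qquad\text{and}\qquad |L^-\cap X^+|+|L^+\cap X^-|\ \ge\ tn
\]
for every closed halfspace $L^+\subset\mathbb R^{d-1}$. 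With this in hand, the lemma reduces to finding $p\in\mathbb R^{d-1}$ with $\tau_{X^+}(p)\ge s$ and $\tau_{X^-}(p)\ge s$, because such a $p$ determines the line $\ell$ through $q$ of the desired projection depth.

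Define the two convex Tukey-depth regions $T^+=\{p:\tau_{X^+}(p)\ge s\}$ and $T^-=\{p:\tau_{X^-}(p)\ge s\}$. Since $s\le 1/d$, the centerpoint theorem applied inside $\mathbb R^{d-1}$ guarantees that each of $T^+$ and $T^-$ is nonempty, so only the intersection $T^+\cap T^-\ne\emptyset$ remains. I would prove this by contradiction using the separating hyperplane theorem together with the coupling inequality: assuming $T^+\cap T^-=\emptyset$, a separating hyperplane $H_0$ splits $\mathbb R^{d-1}$ into two closed halfspaces $M^+,M^-$ containing $T^+$ and $T^-$ respectively. Taking supporting points $p^\pm$ of $T^\pm$ on $H_0$ and using that both closed halfspaces $M^+,M^-$ contain $p^+$ (resp.\ $p^-$), the definitions of $T^\pm$ give counts $|M^\varepsilon\cap X^+|\ge sn^+$ and $|M^\varepsilon\cap X^-|\ge sn^-$ on each side; plugging these into the coupling inequality applied to $L^+=M^+$ and comparing with the bound $sn^++sn^-=sn\le tn$ yields the required contradiction, using in an essential way that $s\le t$.

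The main obstacle is this final combinatorial step. The coupling inequality provides only a control on \emph{sums} of counts across $\tilde H$, whereas the condition to be violated concerns the \emph{individual} halfspace counts witnessing the failure of Tukey depth~$s$ at a given point. The delicate point is to arrange the supporting points $p^\pm\in T^\pm$ so that a single halfspace simultaneously witnesses the depth bound for both $X^+$ and $X^-$, allowing the sum bound from the coupling to be combined with the two individual minima coming from the centerpoint-style $\ge 1/d$ bound on $T^\pm$. Handling the case where $T^\pm$ do not meet the separating hyperplane (strict separation) requires an extra perturbation argument, reducing it to the touching case by a sliding of the separator.
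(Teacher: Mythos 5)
Your setup coincides with the paper's: the antipodal identification of $\Pi^-$ with $\Pi^+$ is exactly the paper's reflection of $X^-$ through $q$ into the set $X'\subset\Pi^+$; your ``coupling inequality'' is the paper's observation that a halfspace through $q$ meets $X^+$ in one side of a flat $l$ and $X'$ in the other, combined with $\tau_X(q)\ge t$; your $T^+,T^-$ are the paper's $Q,Q'$; and the nonemptiness of each via the centerpoint theorem in $\mathbb{R}^{d-1}$ (using $s\le 1/d$) is the same step. So up to the intersection claim $T^+\cap T^-\ne\emptyset$ you have reproduced the paper's argument.

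But the final step, which you yourself flag as ``the main obstacle,'' is genuinely missing, and the sketch you give cannot close it. From $p^\pm\in T^\pm\cap H_0$ you extract the \emph{lower} bounds $|M^{\varepsilon}\cap X^+|\ge sn^+$ and $|M^{\varepsilon}\cap X^-|\ge sn^-$, and you propose to play these against the coupling inequality $|M^+\cap X^+|+|M^-\cap X^-|\ge tn$. All three statements are lower bounds; since $sn^++sn^-=sn\le tn$, they are mutually consistent and no contradiction results. To contradict $\tau_X(q)\ge t$ you must produce an \emph{upper} bound: a halfspace $G$ of $\mathbb{R}^{d-1}$ with $|G\cap X^+|+|G^c\cap X^-|<tn$, i.e.\ you must show that the halfspace lying beyond $T^+$ in the separating direction contains at most $sn^+$ points of $X^+$, and symmetrically for $T^-$. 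This is precisely the assertion the paper makes about its halfspace $h$ (``$h$ contains \ldots at most $(\delta-\varepsilon)|X^+|$ points of $X^+$''), and it is the entire content of the lemma. It is not automatic: when $d-1\ge 2$, a halfspace disjoint from the depth-$s$ region of a point set can contain far more than an $s$-fraction of the points (three unequal clusters already show this), so an arbitrary separating hyperplane $H_0$ will not do. One has to choose the separator at the directional quantile thresholds --- the largest $\beta$ with $|\{x\in X^+:u\cdot x\ge\beta\}|\ge sn^+$ and its analogue for $X^-$ --- and argue that disjointness of $T^+$ and $T^-$ forces these thresholds into the order that makes the coupling inequality fail. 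Your proposal names this difficulty but does not resolve it, so as it stands the proof is incomplete at exactly the step that carries the weight of the argument.
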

\begin{proof}
Let $\delta:= \min \left \{\tau_X(q),\frac{1}{d} \right \}$ and $\varepsilon > 0$.
We prove the result by showing that $\pi_X(q)\ge \delta-\varepsilon$.
Let $\Pi$ be a hyperplane containing $q$; define $\Pi^+$, $\Pi^-$, $X^+$ and $X^-$ with respect to $\Pi$, as above.
We look for a straight line $\ell$ passing through $q$ such that 
\[\tau_{X^+}(\ell \cap \Pi^+)\ge \delta-\varepsilon \textrm{ and }  \tau_{X^-}(\ell \cap \Pi^-) \ge \delta-\varepsilon.\]
For this we find a point $q^+$ in $\Pi^+$, with certain properties,  such that the straight line passing through $q$ and $q^+$ is 
the desired $\ell$. 

For each point $p\in X^{-}$, let $p'$ be intersection point of the line passing
through $q$ and $p$ with $\Pi^+$; let $X'$ be the set of all such
$p'$. Note that $X'$ is just the reflection of $X^{-}$ through $q$ into $\Pi^{+}$. 
Let $\mathcal{P}_q$ be the set of hyperplanes passing through $q$ and not parallel to $\Pi$,
and let $\mathcal{P}_+$ be the set of $(d-2)$-dimensional flats contained in $\Pi^{+}$.
There is a natural  one-to-one correspondence between $\mathcal{P}_+$ and $\mathcal{P}_q$.
For each $l \in \mathcal{P}_+$,  let $\pi_l$ be the hyperplane in $\mathcal{P}_q$ containing both $q$ and $l$;
conversely for each $\pi \in \mathcal{P}_q$, let $l_\pi$ 
be the $(d-2)$-dimensional flat in $\mathcal{P}_+$ defined by the intersection of $\pi$ and $\Pi^{+}$. 
 
Note that the following relationship holds for every
pair of points $p_1 \in X^{+}$ and $p_2 \in X^{-}$;

\MyQuote{$p_1$ and $p_2$ are on the same
half-space defined by  a $\pi \in P_q$ if and only if $p_1$ and $p_2'$ are in the opposite half-spaces
of $\Pi^{+}$ defined by $l_\pi$.} We use this observation
to find $q^+$.
 
Let $\mathcal{H}$ be the set of all $(d-1)$-dimensional half-spaces of $\Pi^{+}$ that contain more than 
$|X^{+}|-(\delta-\varepsilon)|X^{+}|$ points of $X^{+}$; 
let $\mathcal{H}'$ be the set of  all $(d-1)$-dimensional half-spaces of $\Pi^{+}$ that contain more than 
$|X'|-(\delta-\varepsilon)|X'|$ points of $X'$. Therefore, since $\delta-\varepsilon < \frac{1}{d}$, a centerpoint of $X^+$, given
by Theorem~\ref{thm:center_point}, is contained
in every halfspace in $\mathcal{H}$; otherwise, we obtain a contradiction since the opposite half-space
would contain the centerpoint and less than $\frac{1}{d}|X^{+}|$ points of $X^{+}$. Likewise a centerpoint of $X'$, given
by Theorem~\ref{thm:center_point}, is contained
in every halfspace in $\mathcal{H}'$. Therefore,  $Q:=\bigcap \mathcal{H}$ and
$Q':=\bigcap \mathcal{H}'$ are non-empty. A point in the intersection
of $Q$ and $Q'$ is our desired $q^+$.

For the sake of a contradiction, suppose that $Q$ and $Q'$ are disjoint. Let $l \in \mathcal{P}_+$  be a $(d-2)$-dimensional flat 
that separates them in $\Pi^{+}$. Let $h$ be the halfspace (in $\mathbb{R}^d$) defined by $\pi_l$ that 
contains $Q'$ and does not contain $Q$. Note that $h$ contains at least $|X'|-(\delta-\varepsilon)|X'|$ points
of $X'$ and at most $(\delta-\varepsilon)|X^{+}|$ points of $X^{+}$. By (2), $h$ contains at most $(\delta-\varepsilon)|X^{-}|$ points of $X^{-}$. 
Therefore, $h$ contains at most $(\delta-\varepsilon)|X^{+}|+(\delta-\varepsilon)|X^{-}|=(\delta-\varepsilon)|X|$ points
of $X$---a contradiction. Therefore, $Q$ and $Q'$ intersect.

Let $q^{+}$ be a point in  $Q \cap Q'$ and let $\ell$ be the straight line passing through
$q$ and $q^{+}$; let $q^-:=\ell \cap \Pi^-$. Note that $q^{-}$ is 
in the intersection of all $(d-1)$-dimensional halfspaces that contain more than
$|X^{-}|-(\delta-\varepsilon)|X^{-}|$ points of $X^{-}$.  We have that every halfspace in $\Pi^{+}$
that contains $q^{+}$, contains at least $(\delta-\varepsilon)|X^{+}|$ points of $X^{+}$, and
every half space in $\Pi^{-}$ that contains $q^{-}$, contains at least $(\delta-\varepsilon)|X^{-}|$ points of $X^{-}$;
the result follows.
\end{proof}

Although Lemma~\ref{lem:projection} bounds the projection Tukey depth of $q$ with respect to $X$ in terms
of its Tukey depth, it does so up to 
a point; when the Tukey depth is larger than $\frac{1}{d}$, the best lower bound on
the projection Tukey depth given by Lemma~\ref{lem:projection} is of $\frac{1}{d}$;
this bound can be tight. Suppose that $X$ is such that $X^-$ is
equal to the reflection of $X^+$ through $q$ into $\Pi^-$.
Moreover, assume that $X^+$ is such that every point in $\Pi^+$ has Tukey depth of at most $\frac{1}{d}$
with respect to $X^+$.
Note that in this case the Tukey depth of $q$ with respect to $X$ is $\frac{1}{2}$ and the projection Tukey depth of $q$
with respect to $\Pi$ and $X$ is at most $\frac{1}{d}$. The latter implies that the projection
Tukey depth of $q$ with respect to $X$ is at most $\frac{1}{d}$

Lemma~\ref{lem:projection} and Theorem~\ref{thm:center_point} yield at once
a centerpoint theorem for the projection Tukey depth. 

\begin{theorem}[\textbf{Centerpoint theorem for projection Tukey depth}] \label{thm:center_point_projection}
 Let $X$ be a finite set of points in $\mathbb{R}^d$ (or a Borel probability measure
 on $\mathbb{R}^d$). Then there exists a point $q$ such that $\pi_X(q)\ge \frac{1}{d+1}$. 
\end{theorem}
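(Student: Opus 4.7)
The plan is to combine the two ingredients explicitly flagged just before the statement: the classical Tukey centerpoint theorem (Theorem~\ref{thm:center_point}) and Lemma~\ref{lem:projection}, which bounds the projection Tukey depth from below by the Tukey depth (up to $1/d$).

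First, I would invoke Theorem~\ref{thm:center_point} applied to $X$ (or to $\mu$ in the measure case) to produce a centerpoint $q \in \mathbb{R}^d$ satisfying $\tau_X(q) \ge \tfrac{1}{d+1}$. This is the step that identifies the candidate point $q$; no optimization over hyperplanes is needed here, because $\tau_X(q)$ is already the global Tukey depth.

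Next, I would apply Lemma~\ref{lem:projection} to this very $q$, which gives
\[
\pi_X(q) \;\ge\; \min\!\left\{\tau_X(q),\, \tfrac{1}{d}\right\}.
\]
Since $d \ge 2$ implies $\tfrac{1}{d+1} \le \tfrac{1}{d}$, and we already have $\tau_X(q) \ge \tfrac{1}{d+1}$, both quantities inside the minimum are at least $\tfrac{1}{d+1}$, so $\pi_X(q) \ge \tfrac{1}{d+1}$, which is the desired conclusion. (The case $d=1$ is degenerate for the projection definition, but the bound $1/(d+1) = 1/2$ is immediate from the ordinary centerpoint theorem on the line.)

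There is essentially no obstacle to overcome: the centerpoint value $\tfrac{1}{d+1}$ is precisely at or below the saturation threshold $\tfrac{1}{d}$ appearing in Lemma~\ref{lem:projection}, so the minimum in that lemma never degrades the bound. The substantive work has already been done in the proof of Lemma~\ref{lem:projection}; the present theorem is a one-line corollary, and I would present it as such, noting that the same argument transfers verbatim to the Borel probability measure setting since Theorem~\ref{thm:center_point} and Lemma~\ref{lem:projection} are both stated in that generality.
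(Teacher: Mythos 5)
Your proposal is correct and is exactly the paper's proof: take a centerpoint $q$ from Theorem~\ref{thm:center_point} with $\tau_X(q)\ge\frac{1}{d+1}$, then apply Lemma~\ref{lem:projection} and note $\min\{\tau_X(q),\frac{1}{d}\}\ge\frac{1}{d+1}$. Nothing further is needed.
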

\begin{proof}
 By Theorem~\ref{thm:center_point} there exists a point $q$ of Tukey depth with respect
 to $X$ of at least $\frac{1}{d+1}$. By Lemma~\ref{lem:projection}, $\pi_X(q) \ge \min\{\tau_X(q),\frac{1}{d}\} \ge \frac{1}{d+1}$.
\end{proof}

\section{Depth Carath\'eodory's Theorem}\label{sec:car_depth}

Using Lemma~\ref{lem:projection} it can be shown by induction on $d$ that if $q$ has 
a large Tukey depth with respect to $X$ and $|X|$ is sufficiently large with respect
to $d$, then there exist large subsets $X_{1},\dots,X_{2^d}$ of $X$, such that 
for every choice of points
$x_i \in X_i$, $q$ is contained in
$\conv(\{x_1,\dots, x_{2^d}\})$. To reduce this number of subsets to $d+1$ we need a result from~\cite{same_type}.

The order type is a combinatorial abstraction
of the geometric properties of point sets. They were introduced by Goodman and Pollack in~\cite{order_type}.
Two sets of points $X$ and $X'$ in $\mathbb{R}^d$ are said to have the same 
\emph{order type} if there is a bijection, $\varphi$, between them that satisfies the following.
The orientation of every $(d+1)$-tuple $(x_1,\dots,x_{d+1})$ of points of $X$ is
equal to the orientation of the corresponding $(d+1)$-tuple  $(\varphi(x_1),\dots, \varphi(x_{d+1}))$ of $X'$.
This means that the signs of the determinants 
$\operatorname{det}\left [\binom{x_1}{1},\dots,\binom{x_{d+1}}{1} \right ]$ and
$\operatorname{det}\left [\binom{\varphi(x_1)}{1},\dots,\binom{\varphi(x_{d+1})}{1} \right ]$ are equal.
Let  $\boldsymbol{x}:=(x_1,\dots,x_{m})$ and $\boldsymbol{y}:=(y_1,\dots,y_{m})$ be
two $m$-tuples of $\mathbb{R}^d$ (for $m\ge d+1$). We say that
$\boldsymbol{x}$ and $\boldsymbol{y}$ have the same order type
if for every subsequence $(i_1,\dots ,i_{d+1})$ of $(1,\dots, m)$, the orientation
of the $(d+1)$-tuple $(x_{i_1},\dots,x_{i_{d+1}})$ of $\boldsymbol{x}$ is the same
as the orientation of the $(d+1)$-tuple $(y_{i_1},\dots,y_{i_{d+1}})$ of $\boldsymbol{y}$.

In particular this implies that if a
point $q\in \mathbb{R}^d$ is such that $\boldsymbol{x}:=(x_1,\dots, x_m,q)$ and $\boldsymbol{x}':=(x_1',\dots, x_m',q)$ have the same
order type, then $q \in \conv(\{x_1,\dots, x_m\})$ if and only if it $q \in \conv(\{x_1',\dots, x_m'\})$.
B\'ar\'any and Valtr~\cite{same_type} proved the following theorem on order types of tuples
of point sets.

\begin{theorem}[\textbf{Same-type lemma}]\label{thm:same_type}
 For any integers $d,m \ge 1$, there exists a constant $c'=c'(d,m)>0$ (that
 depends only on $d$ and $m$) such that the following holds. 
 Let $X_1, X_2,\dots, X_m$ be finite sets of points in $\mathbb{R}^d$ 
 (or Borel probability measures on $\mathbb{R}^d$). Then there exist
 $Y_1 \subseteq X_1,\dots, Y_m \subseteq X_m$, such that every pair
 of $m$-tuples $(z_1,\dots, z_m)$ and $(z_1',\dots, z_m')$ with $z_i, z_i' \in Y_i$
 have the same order type. Moreover for all $i=1,2,\dots m$, $|Y_i|\ge c'|X_i|$.
 \end{theorem}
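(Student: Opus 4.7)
The natural approach is a two-level induction: an outer induction on $m$ with base case $m=d+1$ being the technical heart. For the inductive reduction from arbitrary $m$ down to $m=d+1$, assume the result for $m=d+1$ with constant $c=c(d)$. Given a family $X_1,\dots,X_m$, I would iterate over all $\binom{m}{d+1}$ choices of $(d+1)$-subsets of indices; in each iteration, apply the base case to the current subsets restricted to that index collection, obtaining new subsets with sizes at least a $c$ fraction of the current sizes. After all iterations the rainbow orientation relative to every $(d+1)$-subset of indices has been fixed, yielding a final constant $c'(d,m)\ge c(d)^{\binom{m}{d+1}}$, which depends only on $d$ and $m$ as required.

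For the base case $m=d+1$, the task is to find subsets $Y_1,\dots,Y_{d+1}$, each a constant fraction of the corresponding $X_i$, such that every rainbow $(d+1)$-simplex has the same orientation. The key geometric observation is that if there exists a direction $\vec v$ along which the $Y_i$ project to pairwise disjoint intervals, then the rainbow orientation is forced, since it is determined by the order of the projections along $\vec v$. The plan is to produce such a direction and such subsets by iteratively applying the Ham-Sandwich theorem: apply it to $X_1,\dots,X_d$ to obtain a hyperplane $H$ simultaneously bisecting them, and by pigeonhole at least half of $X_{d+1}$ lies on one chosen side of $H$; replace each set by the half on that side. Each iteration costs only a factor of $2$ per set, and after a bounded number of such cuts the sets can be confined to disjoint slabs of a hyperplane arrangement, giving the required separating direction.

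The main obstacle will be a rigorous execution of the base case, since a single Ham-Sandwich cut does not by itself fix a rainbow orientation: the orientation of a $(d+1)$-simplex depends jointly on all of its vertices, whereas each cut only pins down one bit of information per set. The key technical point is to iterate the cuts so that after finitely many rounds the rainbow orientation is globally determined, while ensuring that each iteration costs only a constant multiplicative factor in the sizes. An alternative attack, which I would pursue if the direct Ham-Sandwich iteration becomes too tangled, is to combine a centerpoint supplied by Theorem~\ref{thm:center_point} with the colorful Carath\'eodory theorem to locate a uniform witness of orientation inside a common simplex, and then build the $Y_i$ around that witness, tracking constants to obtain the desired $c'=c'(d,m)$.
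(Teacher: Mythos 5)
First, a point of order: the paper does not prove this statement at all --- it is imported verbatim from B\'ar\'any and Valtr~\cite{same_type} (see also Chapter~9 of Matou\v{s}ek's book~\cite{lectures}), so there is no in-paper proof to compare against. Judged on its own terms, your outer reduction is sound and is in fact how the standard proof proceeds: establish the case $m=d+1$, then for general $m$ process the $\binom{m}{d+1}$ index subsets one at a time, each pass shrinking every surviving set by at most the base-case factor, and note that the property ``all transversals of a fixed $(d+1)$-subset of indices have the same orientation'' is inherited by subsets, so earlier gains are not undone.

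The base case, however, contains a genuine error rather than merely an unexecuted step. Your ``key geometric observation'' --- that if the $Y_i$ project to pairwise disjoint intervals along some direction $\vec v$ then the rainbow orientation is forced --- is false for every $d\ge 2$. In the plane, take $Y_1,Y_2,Y_3$ projecting to disjoint intervals on the $x$-axis, with $Y_2$ containing one point above and one point below a line spanned by points of $Y_1$ and $Y_3$: the two rainbow triangles have opposite orientations. The correct criterion, which is the actual heart of the B\'ar\'any--Valtr argument and is absent from your sketch, is \emph{well-separation}: $Y_1,\dots,Y_{d+1}$ have same-type transversals if and only if for \emph{every} partition of $\{1,\dots,d+1\}$ into nonempty parts $I$ and $J$, the hulls $\conv\bigl(\bigcup_{i\in I}Y_i\bigr)$ and $\conv\bigl(\bigcup_{j\in J}Y_j\bigr)$ can be strictly separated by a hyperplane. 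A single direction with disjoint intervals certifies only the $d$ ``initial segment versus final segment'' partitions, not the $2^d-1-d$ interleaved ones (such as $\{1,3\}$ versus $\{2\}$ above), and it is exactly those that pin down the orientations. Once you have this criterion, the ham-sandwich theorem does finish the job, but not as you describe it: for each of the $2^d-1$ partitions $(I,J)$ one bisects $d$ of the sets by a hyperplane $h$, places the $(d+1)$-st on its majority side, and then keeps $Y_i\cap h^+$ for $i\in I$ and $Y_j\cap h^-$ for $j\in J$. Taking ``the half on that side'' for every set, as you propose, leaves all the sets in a single halfspace and separates nothing. So the two missing ideas are: (i) the equivalence of same-type transversals with well-separation over all bipartitions, and (ii) the partition-dependent choice of sides in each halving step.
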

 
We note that the Same-type lemma is phrased only for points in general position in
both \cite{same_type} and in Matou\v{s}ek's book~\cite{lectures}. However, in Remark 5 of~\cite{same_type}
it is noted that the result holds for Borel probability measures and for points not
in general position. 

We are ready to prove the Depth Carath\'eodory's theorem.
\begin{proof}[Proof of Theorem~\ref{thm:car_depth}]
 The result holds for $d=1$ with $c(1,\tau_X(q))=\tau_X(q)$. Assume that $d>1$ and proceed by 
 induction on $d$. 
 Let $n:=|X|$ and let $\Pi$ be a hyperplane containing $q$ that bisects $X$.
 This is, on both of the open halfspaces defined by $\Pi$ there are at least
 $\lfloor n/2 \rfloor$ points of $X$. Define $X^+, X^-, \Pi^+$ and $\Pi^-$ 
 as in Section~\ref{sec:projection} with respect to $X$ and $\Pi$.
 
 Let $\delta:=\min\{\tau_X(q), \frac{1}{d}\}$. By Lemma~\ref{lem:projection} the projection
 Tukey depth of $q$ with respect to $X$ is at least $\delta$. Therefore, there exist
 a line $\ell$ such that $q^+:=\ell \cap \Pi^+$ and $q^-:=\ell \cap \Pi^-$ have Tukey depth 
 at least $\delta$ with respect to $X^+$ and $X^-$ respectively. By induction there exists 
 a constant $c(d-1,\delta)$ and sets $Y_1^+,\dots Y_d^+ \subset X^+$ and $Y_1^-,\dots Y_d^- \subset X^-$
 such that the following holds. Every $Y_i^+$ has cardinality at least
 $c(d-1,\delta)|X^+| \ge c(d-1,\delta) \lfloor n/2 \rfloor$, and every $Y_i^-$ has cardinality at least
 $c(d-1,\delta)|X^-| \ge c(d-1,\delta) \lfloor n/2 \rfloor$; moreover, $q^+\in \conv(\{x_1,\dots, x_d\})$ for every choice of $x_i \in Y_i^+$,
 and $q^- \in \conv(\{x_1',\dots, x_d'\})$ for every choice of $x_i' \in Y_i^-$. 
 
 \MyQuote{Therefore, $q$ is in the convex hull of 
 $\{x_1,\dots, x_d\} \cup \{x_1',\dots, x_d'\}$ for every choice of $x_i \in Y_i^+$ and $x_i' \in Y_i^-$.}
 
 Apply the Same-type lemma to $Y_1^+,\dots, Y_d^+, Y_1^-,\dots, Y_d^-,\{q\}$, and obtain sets
 $Z_1\subset Y_1^+,\dots, Z_d\subset Y_d^+,$ and $Z_{d+1}\subset Y_1^-,\dots, Z_{2d}\subset Y_d^-$
 each of at least $c'(d,2d)c(d-1,\delta) \lfloor n/2 \rfloor$ points, such that the following holds.
 Every pair of $(2d+1)$-tuples $(z_1,\dots, z_{2d},q)$ and $(z_1',\dots, z_{2d}',q)$ with $z_i \in Z_i$ and $z_i' \in Z_i'$
 have the same order type. Let $(z_1,\dots, z_{2d},q)$ be one such $(2d+1)$-tuple. By (2), $q$ is in the convex hull of $\{z_1, \dots, z_{2d}\}$.
 Therefore, by Carath\'eodory's theorem
 there exists a $(d+1)$-tuple $(i_1,\dots,i_{d+1})$ such that $q$ is a convex combination
 of $z_{i_1},\dots, z_{i_{d+1}}$. The result follows by setting $X_j:=Z_{i_j}$.
\end{proof}

Note that Theorem~\ref{thm:car_depth} also applies for the simplicial depth. That is, suppose that $q$ is in a constant
proportion of the simplices spanned by points of $X$. Then, there exist subsets $X_1,\dots,X_{d+1}$ of $X$,
of linear size, such that $q$ is in every simplex that has exactly one vertex in each $X_i$.

It is noted in~\cite{same_type} that the constant in Theorem~\ref{thm:same_type} is bounded from below by 
\[c'(d,m) \ge (d+1)^{-(2^d-1)\binom{m-1}{d}}.\]
Therefore, the proof of Theorem~\ref{thm:car_depth} implies that $c(d,\tau_X(q))$ is an increasing function on $\tau_X(q)$,
when $0 < \tau_X(q)\le \frac{1}{d}$ and $d$ fixed.

\subsection{Depth Carath\'eodory's Theorem in the Plane}\label{sec:car_depth_plane}

The Depth Carath\'eodory's theorem~(Theorem~\ref{thm:car_depth}) can be applied when $q$
has constant Tukey depth with respect to $X$. That is when $\tau_X(q)=c$ for some 
positive constant $c$. In  this section we prove the Depth Carath\'eodory's
theorem in the plane for points of subconstant depth with respect to $X$, for example
when $\tau_X(q)=\frac{1}{n}$. We use the simplicial depth as it is easier
to quantify the depth of a point in this case. Also, we revert to using the simplicial depth of
$q$ with respect to $X$  as the number of triangles $\tilde{\sigma}_X(q)$ with vertices on $X$
that contain $q$ (rather than this number divided by $\binom{n}{2}$). 
We consider only the case when $X$ is a set of $n$ points in general position
in the plane. 

\begin{figure}
  \begin{center}
   \includegraphics[width=0.6\textwidth]{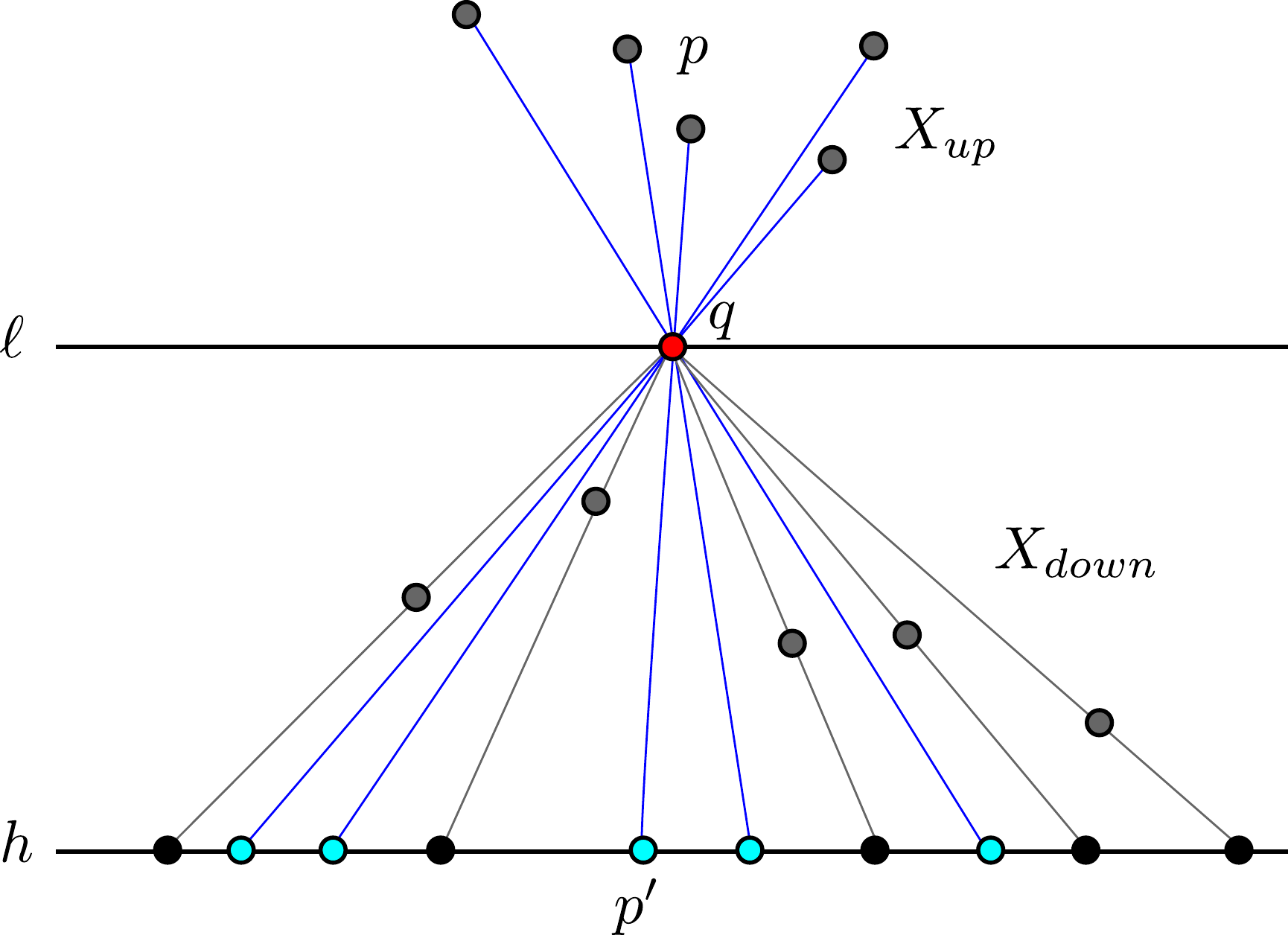}
\end{center}
\caption{Illustration of the projection in the proof of Theorem~\ref{thm:car_depth_plane}. 
$X_{up}'$ is represented as blue points on the line $h$, and $X_{down'}$ as black points.}
\label{fig:proy2}
\end{figure}

\begin{theorem}\label{thm:car_depth_plane}
Let $X$ be a set of $n$ points in general position in the plane. Let $q\in \mathbb{R}^2 \setminus X$ be a point such that
$X\cup \{q\}$ is in general position.
Then $X$ contains three disjoint subsets $X_1$, $X_2$, $X_3$ such that $|X_1||X_2||X_3| \ge \frac{1}{16 \ln n}\tilde{\sigma}_X(q)$,
and $q$ is contained in every triangle with vertices $x \in X_1$, $y \in X_2$, $z \in X_3.$
\end{theorem}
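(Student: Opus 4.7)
I would project $X$ through $q$ onto a line, reducing the problem to a one-dimensional sandwich-counting question. Fix any line $\Pi$ through $q$ and a parallel line $h \ne \Pi$; for each $x \in X$ let $\psi(x) \in h$ be the intersection of $h$ with the line through $q$ and $x$. Color $\psi(x)$ blue if $x$ lies above $\Pi$ and black if $x$ lies below (cf.\ Figure~\ref{fig:proy2}). A direct verification shows that $q \in \conv\{x,y,z\}$ iff, in the order on $h$, the middle of the three projections has color different from that of the other two. Hence $\tilde\sigma_X(q) = S_1 + S_2$, where $S_1$ (resp.\ $S_2$) counts triples of pattern blue--black--blue (resp.\ black--blue--black) along $h$. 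By pigeonhole assume $S_1 \ge \tilde\sigma_X(q)/2$.

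\textbf{1D reduction.} Any three disjoint sets $X_1, X_2 \subseteq X^+$ (blue) and $X_3 \subseteq X^-$ (black) occupying three disjoint $\psi$-intervals in the order $X_1, X_3, X_2$ automatically satisfy the required triangle-containment property. Enumerate the black points along $h$ as $v_1, \dots, v_m$, and let $\alpha_j,\beta_j$ be the numbers of blue points with $\psi$-coordinate less than, resp.\ greater than, $\psi(v_j)$. General position gives $\alpha_j+\beta_j = |X^+|$, with $\alpha_j$ nondecreasing and $\beta_j$ nonincreasing; counting by the middle yields $S_1 = \sum_{j=1}^m \alpha_j\beta_j$. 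Choosing $X_3 = \{v_l,\dots,v_r\}$ makes $|X_1|=\alpha_l$, $|X_2|=\beta_r$, $|X_3|=r-l+1$, so the task becomes finding $l\le r$ with $\alpha_l(r-l+1)\beta_r$ large.

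\textbf{Harmonic estimate and extraction.} Split indices by whether $\alpha_j \le |X^+|/2$ or $\beta_j \le |X^+|/2$; by a factor-of-$2$ pigeonhole one of these halves, say the ``left'' one (with $\alpha_j\le|X^+|/2$), contributes at least $S_L \ge S_1/2$ to the sum. On the left half, $\beta_j \ge |X^+|/2$ throughout, and $j\mapsto \alpha_j\beta_j$ is nondecreasing, so its top $s$ values form a contiguous suffix $\{j^*-s+1,\dots,j^*\}$, where $j^*$ is the last left-half index. Applying the standard harmonic bound $\sum_{s=1}^{m_L} y_s \le (1+\ln m_L)\max_s s\,y_s$ to $y_s := \alpha_{j^*-s+1}\beta_{j^*-s+1}$ yields some $s$ with $s y_s \ge S_L/(1+\ln n)$. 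Take $X_3 := \{v_{j^*-s+1},\dots,v_{j^*}\}$ and let $X_1$, $X_2$ be the blue points strictly left of $\psi(v_{j^*-s+1})$, resp.\ strictly right of $\psi(v_{j^*})$. The bounds $|X_1|\ge y_s/|X^+|$ (from $\alpha_{j^*-s+1}\beta_{j^*-s+1}=y_s$ and $\beta_{j^*-s+1}\le|X^+|$), $|X_2|\ge |X^+|/2$, and $|X_3|=s$ combine to give
\[
|X_1||X_2||X_3| \ge \frac{y_s}{|X^+|}\cdot\frac{|X^+|}{2}\cdot s = \frac{s y_s}{2} \ge \frac{S_L}{2(1+\ln n)} \ge \frac{\tilde\sigma_X(q)}{8(1+\ln n)} \ge \frac{\tilde\sigma_X(q)}{16\ln n},
\]
where the final inequality uses $1+\ln n \le 2\ln n$ for $n\ge 3$; the cases $n\le 2$ are trivial since then $\tilde\sigma_X(q)=0$.

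\textbf{Main obstacle.} The key difficulty is keeping the logarithmic loss to a single factor of $\ln n$ rather than $(\ln n)^2$: a naive dyadic bucketing in both $\alpha_j$ and $\beta_j$ would cost $(\log n)^2$. The crucial trick is to exploit the identity $\alpha_j+\beta_j = |X^+|$ to split into an ``$\alpha$ small'' and a ``$\beta$ small'' half, which makes the unbucketed variable $\ge |X^+|/2$ for free and reduces the problem to a \emph{monotone} sequence on which the harmonic inequality applies directly to a single contiguous interval, losing only one factor of $\ln n$.
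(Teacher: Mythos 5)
Your proof is correct and follows essentially the same route as the paper's: project $X$ through $q$ onto a line, characterize containment of $q$ by the middle projection having the minority color, write the relevant triangle count as $\sum_j \alpha_j\beta_j$ over the middle points, restrict to the half where $\alpha_j\le\beta_j$, and extract one heavy weighted term via the harmonic bound to obtain the three consecutive blocks. If anything, your bookkeeping is slightly more careful than the paper's (which silently uses $\sum_{k\le m}1/k\le\ln n$, off by an additive constant), and your cancellation of $|X^+|$ shows that choosing $\Pi$ to bisect $X$ is not actually needed.
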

\begin{proof}
Let $\ell$ be a line passing through $q$ that bisects $X$. Without loss of generality, assume that at least half of the triangles which contain $q$ and
which have their vertices in $X$, have two of their vertices below $\ell$;  denote this set of triangles by $T$. 
Further assume that $\ell$ is horizontal.
 Let $X_{down}$ be the points of $X$ below $\ell$ and let $X_{up}$ be the points of $X$ above
$\ell$. Project $X$ on a horizontal line $h$ far below $X_{down}$ as follows. The image $p'$ of a point $p \in X$ is the intersection
point of the line through $q$ and $p$ with $h$. Let $X_{up}'$ and $X_{down}'$ be the images of 
$X_{up}$ and $X_{down}$, respectively. See Figure~\ref{fig:proy2}.

Note that a triangle with vertices $x \in X_{down}$, $y\in X_{up}$ and $z \in X_{down}$ contains $q$ 
if and only if $x'$, $y'$ and $z'$ appear in the order $(x',y',z')$ with respect to $h$. For a point $p' \in X_{up}'$
let $l(p')$ be the number of points in $X_{down}'$ to its left, and let 
$r(p')$ be the number of points in $X_{down}'$ to its right. By the previous observations we have that
\[|T|=\sum_{p' \in X_{up}'} l(p') r(p').\]

Let $\mathcal{L}:=\{p' \in X_{up}': r(p') \ge l(p')\}$ and $\mathcal{R}:=\{p' \in X_{up}': r(p')< l(p')\}$.
Assume without loss of generality that at least half of the triangles in $T$ are such that one of its vertices lies
in $\mathcal{L}$. That is
\begin{equation}\label{eq:lr}
\sum_{p' \in \mathcal{L}} r(p') l(p') \ge \frac{|T|}{2}.
\end{equation}
Also note that
\begin{equation}\label{eq:sum_l}
\sum_{p' \in \mathcal{L}} r(p') l(p') \le n\sum_{p' \in \mathcal{L}}l(p').
\end{equation}

Let $p_1',\dots, p_m'$ be the points in $\mathcal{L}$ in their left-to-right order in $h$. Consider
the sum \[\sum_{i=1}^{m} l(p_i')(m-i+1).\] Let $M$ be the maximum value attained by a term in this sum.
Then, for all $i=1,\dots, m$, we have that $l(p_i')\le M \frac{1}{(m-i+1)}$. Combining this observation
with (\ref{eq:lr}) and (\ref{eq:sum_l}), we have
\[ M \ln n \ge M \sum_{i=1}^m \frac{1}{(m-i+1)} \ge \sum_{i=1}^{m} l(p_i') \ge \frac{|T|}{2n}.\]
Therefore, 
\begin{equation}
 M \ge \frac{|T|}{2n\ln n}.
\end{equation}

Let $i^*$ be such that $l(p_{i^*}')(m-i^*+1)=M$. Set: $X_1$ to be the set of points of $X_{down}$ such their images
are to the left of $p_{i^*}$, $X_2$ to be set of points $\{p_{i^*},\dots,p_{m}\}$ of $X_{up}$
whose images lie between $p_{i^*}'$ and $p_{m}'$, 
and $X_3$ the set of set of points of $X_{down}$ such their images
are to the right of $p_m'$. Note that every triangle with vertices $x \in X_1$, $y \in X_2$, $z \in X_3$, contains
$q$. Also note that $|X_{down}| \ge n/2$ and $r(p') \ge  l(p')$ for all $p' \in \mathcal{L}$ imply that $|X_3| \ge n/4$. 
Moreover, \[|X_1||X_2||X_3| \ge l(p_{i^*}')(m-i^*+1)\frac{n}{4} = M\frac{n}{4}\ge \frac{1}{8 \ln n}|T| \ge \frac{1}{16 \ln n}\tilde{\sigma}_X(q);\]
the result follows.
\end{proof}

\section{Helly's and Kirchberger's Depth Theorems}~\label{sec:helly_depth}

In this section we use the Depth Carath\'eodory's theorem to prove ``depth'' versions of Helly's and  Kirchberger's theorems. 
Afterwards, for the sake of completeness we show that the Depth Helly's theorem
implies the Depth Carath\'eodory theorem for point sets. 

\begin{theorem}[\textbf{Depth Helly's theorem}]\label{thm:helly_depth}
 Let $ 0 \le \beta \le 1$ and $\alpha=c(d,1-\beta)$ (where $c(d,\beta)$ is as in Theorem~\ref{thm:car_depth}). 
 Let $\mathcal{F}=\{C_1,\dots, C_n\}$ be a family of $n \ge d+1$ convex sets in $\mathbb{R}^d$. Suppose that for every
 choice of subfamilies $F_1,\dots,F_{d+1}$ of $\mathcal{F}$, each with at least $\alpha n$
 sets, there exists a choice of sets $D_i \in F_i$ such that $\bigcap_{i=1}^{d+1} D_i \neq \emptyset$.
 Then there exists a subfamily $\mathcal{F}'\subset \mathcal{F}$ of at least $\beta n$ sets such that
 $\bigcap \mathcal{F}'\neq \emptyset$.
\end{theorem}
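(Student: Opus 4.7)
The plan is to argue the contrapositive and translate the intersection problem for convex sets into a convex-hull problem for points via Lemma~\ref{lem:link}. Assume no subfamily $\mathcal{F}'\subseteq\mathcal{F}$ with $|\mathcal{F}'|\ge \beta n$ has $\bigcap\mathcal{F}'\ne\emptyset$; equivalently, every point of $\mathbb{R}^d$ lies in strictly fewer than $\beta n$ of the $C_i$'s. The goal is to produce pairwise disjoint subfamilies $F_1,\ldots,F_{d+1}\subseteq\mathcal{F}$, each of size at least $\alpha n$, such that $\bigcap_{j=1}^{d+1} D_j=\emptyset$ for every transversal $D_j\in F_j$; this contradicts the hypothesis and proves the theorem.

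The construction proceeds as follows. Fix a suitable point $q\in\mathbb{R}^d$ (the choice is the crux, discussed below). By assumption, $J:=\{i:q\notin C_i\}$ satisfies $|J|\ge (1-\beta)n$. For each $i\in J$, let $c_i$ be the closest point of $C_i$ to $q$; the nearest-point characterization yields a supporting hyperplane of $C_i$ at $c_i$ perpendicular to the segment from $q$ to $c_i$, so $C_i\subseteq H_q(c_i)$, the halfspace appearing in Lemma~\ref{lem:link}. Set $X:=\{c_i:i\in J\}$.

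Granting that $q$ has Tukey depth at least $1-\beta$ with respect to $X$, I would apply Theorem~\ref{thm:car_depth} to $X$ and $q$ to obtain pairwise disjoint $X_1,\ldots,X_{d+1}\subseteq X$, each of size at least $c(d,1-\beta)\,|X|=\alpha|X|$, such that $q\in\conv\{x_1,\ldots,x_{d+1}\}$ for every transversal $x_j\in X_j$. Setting $F_j:=\{C_i:c_i\in X_j\}$ produces pairwise disjoint subfamilies of the required size. For any transversal $D_j=C_{i_j}\in F_j$, Lemma~\ref{lem:link} gives $\bigcap_j H_q(c_{i_j})=\emptyset$, and since $C_{i_j}\subseteq H_q(c_{i_j})$ we conclude $\bigcap_j D_j\subseteq\bigcap_j H_q(c_{i_j})=\emptyset$, the desired contradiction.

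The hard part is establishing the depth estimate $\tau_X(q)\ge 1-\beta$, which fails for a generic $q$ and demands a careful choice. My plan is to take $q$ to be a global maximizer of the incidence count $|\{i:q\in C_i\}|$. Local optimality, combined with the standing hypothesis that this maximum is strictly less than $\beta n$, should rule out the existence of a halfspace $H\ni q$ containing fewer than $(1-\beta)|X|$ of the $c_i$'s: such an $H$ would force ``too many'' $c_i$'s into $H^c$, and a small translation of $q$ into $H^c$ would then strictly increase the incidence count, contradicting maximality.
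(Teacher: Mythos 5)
Your overall architecture is the same as the paper's: pick a special point $q$, replace each convex set by its nearest point to $q$, show that $q$ has Tukey depth at least $1-\beta$ with respect to the resulting point set, invoke Theorem~\ref{thm:car_depth}, and translate back via the inclusion $C_i\subseteq H_q(c_i)$ together with Lemma~\ref{lem:link}. All of that is sound. The proof breaks at exactly the step you yourself flag as the crux: the choice of $q$. Taking $q$ to be a global maximizer of the incidence count $|\{i:q\in C_i\}|$ does not give the depth bound, and the perturbation argument cannot be repaired. The incidence count is integer-valued: a small translation of $q$ into $H^c$ brings $q$ \emph{closer} to the sets $C_i$ whose nearest points $c_i$ lie in $H^c$, but it does not put $q$ \emph{inside} any of them (each is at positive distance, which stays positive under a sufficiently small move), while the move may eject $q$ from sets whose boundary passes through $q$. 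So the count never strictly increases under a small translation and no contradiction with maximality arises. Worse, the claimed depth property is simply false for such a $q$: take $n$ pairwise disjoint compact convex sets, so the maximal incidence is $1<\beta n$; a maximizer $q$ lying in an ``extreme'' set has all nearest points $c_i$ on one side, and the closed halfspace on the other side of $q$ contains none of them, so $\tau_X(q)=0$. This is precisely why the paper works instead with the continuous functional $f(x)=\min\{\max_{C\in\mathcal{F}'}d(x,C):\ \mathcal{F}'\subseteq\mathcal{F},\ |\mathcal{F}'|\ge\beta n\}$, which is everywhere positive under the contradiction hypothesis and attains a minimum; at a minimizer a genuine variational argument is available, because moving $q$ toward the convex hull of the farthest (active) nearest points strictly decreases $f$. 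That continuity is what converts ``too many $c_i$ on one side of $q$'' into an actual improvement and hence a contradiction; the discrete incidence count has no such mechanism.

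There is also a secondary quantitative gap. You form $X$ only from $J=\{i:q\notin C_i\}$, so $|X|=|J|$ may be barely above $(1-\beta)n$, and Theorem~\ref{thm:car_depth} then yields sets of size $c(d,1-\beta)\,|J|$, which can be strictly smaller than the $\alpha n$ needed to invoke the theorem's hypothesis on the subfamilies $F_1,\dots,F_{d+1}$. The paper avoids this by taking $X$ to be the multiset of all $n$ nearest points, viewed as a probability measure with atoms of weight $1/n$, so that a subset of measure $\alpha$ corresponds to at least $\alpha n$ members of $\mathcal{F}$.
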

\begin{proof}
We assume that the $n$ convex sets are compact. The result for non-compact convex sets follows easily.
 Suppose for a contradiction that every subfamily of $\mathcal{F}$ of at least $\beta n$ sets
 is not intersecting. For every  $\mathcal{F}'$ subfamily of 
$\mathcal{F}$ of at least $\beta n$ sets and every $x \in \mathbb{R}^d$ define 
\[f(x,\mathcal{F}')=\max\{d(x,C):C \in \mathcal{F}'\},\]
where $d(x,C)$ denotes the distance from point $x$ to the set $C$. 
By hypothesis $f(x, \mathcal{F}') > 0$.
Since the elements of $\mathcal{F}$ are closed and bounded, $f(x, \mathcal{F}')$ 
attains a minimum value at some point of $\mathbb{R}^d$. 
Let 
\[f(x)=\min\{f(x,\mathcal{F}'):\mathcal{F}'\subset \mathcal{F} \textrm{ and } |\mathcal{F}'| \ge \beta n\}\]
Since there are a finite number of subfamilies of $\mathcal{F}$ and the previous remark, $f(x)$ attains its
minimum value at some point of $q \in \mathbb{R}^d$.  

For every $C_i \in \mathcal{F}$,
let $x_i$ be a point in $C_i$ such that $d(q,C_i)=d(q,x_i)$. 
Let $X$ be the multiset with elements $x_1,\dots,x_n$. Let $C_1',\dots,C_k'$ be the sets in $\mathcal{F}$
such that $f(q)=d(q,C_i')$ and let $x_1',\dots, x_k'$ be their corresponding points
in $X$.  We first show that 

\MyQuote{$q$ is contained in $\conv (\{x_1',\dots,x_k'\})$.}

Suppose for a contradiction  that $q$ is not contained in $\conv (\{x_1',\dots,x_k'\})$. 
Let $q'$ be a point closer to $\conv (\{x_1',\dots,x_k'\})$ than $q$. 
Let $\mathcal{F}'$ be a subfamily of $\mathcal{F}$ of at least $\beta n$ sets 
such that $f(q,\mathcal{F}')=f(q)$. Suppose that $C$ is a set of $\mathcal{F}'$ such
that $d(q,C)<f(q,\mathcal{F}')=f(q)$. Note that we can choose $q'$ sufficiently close
to $q$ so that $d(q',C)<f(q)$. We choose $q'$ sufficiently close to $q$ 
so that $d(q',C)<f(q)$ for all such $C \in \mathcal{F}'$. Let now $C$ 
be a set of $\mathcal{F}$ such that $d(q,C)=f(q,\mathcal{F}')=f(q)$.
Since the closest point of $C$ to $q$ is one of the $x_i'$, then
$d(q',C)<d(q,C)$. Therefore, $f(q',\mathcal{F}')<f(q,\mathcal{F})$ and $f(q') < f(q)$---a contradiction to our choice of $q$.

To every point $x_i \in X$ assign the weight $\frac{m(x_i)}{n}$, where
$m(x_i)$ is the multiplicity of $x_i$ in $X$. 
Thus, we may regard $X$ as a Borel probability measure.
We now show that 

\MyQuote{$q$ has Tukey depth greater than $1-\beta$ with respect to $X$.}

Suppose to the contrary that there exists a hyperplane $\Pi$ through $q$, such that the measure of one of the two halfspaces bounded by
$\Pi$ is at most $(1-\beta)$. Then on the opposite side there is a subset $X'$ of $X$ of measure 
at least $\beta$. By (6) at least one of these points must be in $\{x_1',\dots,x_k'\}$. If we move
$q$ slightly closer to $\conv(X')$ we obtain a $q'$ with $f(q')< f(q)$; this is a contradiction to the choice of $q$.

Apply the Depth Carath\'eodory theorem to $X$ and $q$, and obtain $X_1, \dots, X_{d+1}$ subsets
of $X$ each of at least $\alpha n$ points, such that the following holds. For every choice of $y_i$ in $X_i$,
$q$ is contained in $\conv(\{y_1,\dots, y_{d+1}\})$.  Let $C_i''$ be the set in $\mathcal{F}$ that defines
$y_i$.  Then by Lemma~\ref{lem:link}, \[\bigcap_{i=1}^{d+1} C_i ''=\emptyset.\]
Let \[F_i=\{C_j \in \mathcal{F}: \textrm{ there is a point } y \textrm{ in } X_i \textrm{ defined by } C_j\}\]
Since each $X_i$ is a multiset and every $x_i'$ is associated to a different $C_i$, then $F_1,\dots,F_{d+1}$ are subsets of $\mathcal{F}$, each with at least $\alpha n$
 sets, such that for every choice  $D_i \in F_i$, $\bigcap_{i=1}^{d+1} D_i$ is empty---a contradiction.
\end{proof}

\begin{proof}[Proof of Theorem~\ref{thm:car_depth} (for point sets) using Theorem~\ref{thm:helly_depth}]\

Let $X:=\{x_1,\dots,x_n\}$ be a set of $n$ points in $\mathbb{R}^d$ and let $q$ be a point in $\mathbb{R}^d$. 
Let $\beta>1-\tau_X(q)$. Define $H_q(x_i)$ as in Lemma~\ref{lem:link}; we say that
$H_q(x_i)$ is \emph{defined by} $x_i$. Let $\mathcal{F}:=\{H_q(x_1), \dots, H_q(x_n)\}$. 

Let $\mathcal{F}'$ be a subfamily of $\mathcal{F}$ of $\beta n$ sets.
Let $X'$ be the subset $X$ of points defining the sets in $\mathcal{F}'$. Note that since $q$ has Tukey depth 
greater than $1-\beta$ with respect to $X$, $q$ cannot be separated from $X'$ by a hyperplane $\Pi$. 
Otherwise on the closed halfspace that contains $q$ and that is bounded by $\Pi$, there are at most $(1-\beta) n$ points of $X$. 
Thus, $q \in \conv(X')$. Then by Lemma~\ref{lem:link},
\[\bigcap_{x_i \in X'} H_q(x_i)=\emptyset.\] 
Therefore, by the converse of Theorem~\ref{thm:helly_depth} there exist an $\alpha=c(d,1-\beta)$ and subfamilies $F_1,\dots,F_{d+1}$ of $\mathcal{F}$,
each of at least $\alpha n$ sets, such that for every choice of sets $H_q(x_i') \in F_i$ we have that $\bigcap_{i=1}^{d+1} H_q(x_i') = \emptyset$.
Thus, by Lemma~\ref{lem:link}, for every such choice we have that $q \in \conv(\{x_1',\dots, x_{d+1}'\})$.
The result follows by setting $X_i$ to be be the set of points defining the sets in $F_i$ and letting $\beta$ tend to $1-\tau_X(q)$.
 \end{proof}
 
As with the classical versions, from the Depth Carath\'eodory's or Depth Helly's theorems we can prove 
a ``depth'' version of the Kirchberger's theorem.
 
\begin{theorem}[\textbf{Depth Kirchberger's theorem}]
Let $ 0 \le \beta \le 1$, $d \ge 2$ and $\alpha=c(d,1-\beta)$ (where $c(d,\beta)$ is as in Theorem~\ref{thm:car_depth}). 
 Let $R$ and $B$ be sets of points in $\mathbb{R}^d$ such that $R \cup B$ has $n$ points. Suppose 
 that for every choice of subsets $Y_1,\dots,Y_{d+2}$ of $R \cup B$, each of at least
 $\alpha n$ points, there exists a set $S=\{y_i: y_i \in Y_i\}$ that satisfies the following.
 The set $S \cap R$ can be separated by a hyperplane from $S \cap B$. Then
 there exist subsets $R' \subset R$ and $B' \cap B$ such that $|R' \cup B'| \ge \beta n$, and
 $R'$ can be separated from $B'$ by a hyperplane.
\end{theorem}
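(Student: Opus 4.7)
The plan is to mimic the classical reduction of Kirchberger's theorem to Carath\'eodory's theorem via lifting, this time invoking the depth version (Theorem~\ref{thm:car_depth}) in the lifted dimension. For each $r \in R$ set $\hat r := (r,1) \in \mathbb{R}^{d+1}$, and for each $b \in B$ set $\hat b := (-b,-1) \in \mathbb{R}^{d+1}$; let $\hat X$ denote the resulting set of $n$ lifted points. A short linear-algebra check shows that for any $R' \subseteq R$ and $B' \subseteq B$, the sets $R'$ and $B'$ are separable by a hyperplane in $\mathbb{R}^d$ if and only if $0$ is not contained in the convex hull of the corresponding lifted subset of $\hat X$: a separating hyperplane $\{a \cdot x = c\}$ corresponds precisely to the linear functional $(a,-c)$ being strictly positive on every lift in the subset.

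I would then argue by contradiction. Assume no subsets $R' \subseteq R$, $B' \subseteq B$ with $|R' \cup B'| \ge \beta n$ can be separated. By the lifting equivalence of the previous paragraph, every subset of $\hat X$ of size at least $\beta n$ contains $0$ in its convex hull; equivalently, every open halfspace of $\mathbb{R}^{d+1}$ that avoids $0$ meets $\hat X$ in fewer than $\beta n$ points, so every closed halfspace of $\mathbb{R}^{d+1}$ that contains $0$ meets $\hat X$ in more than $(1-\beta)n$ points. That is, $\tau_{\hat X}(0) > 1-\beta$.

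Now apply Theorem~\ref{thm:car_depth} in ambient dimension $d+1$ to $\hat X$ and the origin to obtain pairwise disjoint subsets $\hat X_1,\dots,\hat X_{d+2}$ of $\hat X$, each of size at least $\alpha n$ (with $\alpha = c(d+1,1-\beta)$, the depth Carath\'eodory constant in the lifted dimension), such that $0 \in \conv(\{\hat y_1,\dots,\hat y_{d+2}\})$ for every selection $\hat y_i \in \hat X_i$. Let $Y_i \subseteq R \cup B$ be the set of points whose lifts lie in $\hat X_i$. The $Y_i$ are pairwise disjoint, each has size at least $\alpha n$, and by the lifting equivalence, for every selection $y_i \in Y_i$ the set $S = \{y_1,\dots,y_{d+2}\}$ has $S \cap R$ non-separable from $S \cap B$, contradicting the hypothesis. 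The main obstacle I expect is purely bookkeeping: confirming that strict linear separability of the finite sets $R', B'$ corresponds exactly to $0 \notin \conv$ of the lift (using closedness of convex hulls of finite sets and handling the empty case), and verifying that the combinatorial assumption ``every $\beta n$-subset contains $0$'' yields the strict Tukey-depth inequality $\tau_{\hat X}(0) > 1 - \beta$ required to invoke Theorem~\ref{thm:car_depth}; everything else is a direct pushforward of the Depth Carath\'eodory conclusion from $\mathbb{R}^{d+1}$ back to $\mathbb{R}^d$.
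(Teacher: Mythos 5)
Your proposal is correct and follows essentially the same route as the paper: lift $R$ and $B$ to $\mathbb{R}^{d+1}$ via $(r,1)$ and $(-b,-1)$, translate hyperplane separability of $S\cap R$ from $S\cap B$ into the origin avoiding the convex hull of the lifted set, deduce a Tukey-depth bound for the origin, and apply Theorem~\ref{thm:car_depth} in the lifted dimension to obtain the $d+2$ subsets that contradict the hypothesis. Your observation that the relevant constant is $c(d+1,1-\beta)$ (the Depth Carath\'eodory constant in dimension $d+1$, yielding $d+2$ classes) is in fact more careful than the theorem statement's $c(d,1-\beta)$, which appears to be a typo.
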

\begin{proof}
 We map $R$ and $B$ to $\mathbb{R}^{d+1}$ as follows.
 Let \[\mathcal{R}:=\{(x_1,\dots,x_d,1): (x_1,\dots,x_d) \in R\}\] and \[\mathcal{B}:=\{(-x_1,\dots,-x_d,-1): (x_1,\dots,x_d) \in B\}.\] 
 It can be shown that a subset $\mathcal{S}$ of $\mathcal{R} \cup \mathcal{B}$ can be separated from the origin by a hyperplane
 if and only if its preimage $S$ in $R \cup B$ satisfies the following. The set $S \cap R$ can be separated
 by a hyperplane from $S \cap B$. 
 
 We claim that the Tukey depth of the origin with respect to $\mathcal{R} \cup \mathcal{B}$ is less than
 $1-\beta$. Suppose for a contradiction the Tukey depth of the origin with respect to $\mathcal{R} \cup \mathcal{B}$ 
 is at least $1-\beta$ . By Theorem~\ref{thm:car_depth}  there exist subsets $\mathcal{Y}_1,\dots, \mathcal{Y}_{d+2}$ of $\mathcal{R} \cup \mathcal{B}$
 each of at least $\alpha n$ points that satisfy the following. For every choice of $\mathcal{S}=\{y_i: y_i \in \mathcal{Y}_i\}$, the set
 $\mathcal{S}$ cannot be separated from the origin by a hyperplane.
  Let $Y_1,\dots,Y_{d+2}$ be the preimages of the $\mathcal{Y}_i$'s. Then for every subset $S=\{y_i: y_i \in Y_i\}$, 
 $S \cap R$ cannot be separated from $S \cap B$ by a hyperplane---a contradiction.
 
 Therefore, there exists a subset of $\mathcal{R}\cup \mathcal{B}$ of less than $(1-\beta)n$ points that can 
 be separated from the origin by a hyperplane; the complement $\mathcal{X}$ of this set in $\mathcal{R} \cup \mathcal{B}$ can
 also be separated from the origin and has more than $\beta n$ points. Let $X$ be the preimage of $\mathcal{X}$.
 The result follows by setting $R':=X\cap R$ and $B':=X \cap B$.
\end{proof}

We have used quotes when referring to the depth versions of Helly's and Kirchberger's theorems. We done so
because their relationships with the notion of depth is only in their close relationship to the Depth
Carath\'eodory's theorem. The depth versions of Carath\'eodory's and Helly's theorems seem to be a combination of the colorful
and fractional versions. The hypothesis is that for every subfamily in which every object is assigned
one of $(d+1)$ colors and every color class is large, there exists a colorful $(d+1)$-tuple
that satisfies the property. The conclusion is that a large subfamily satisfies the property.
We conclude the paper by mentioning two other results that are fractional/colorful versions
of their classical counterparts.

Recently, the following colorful fractional Helly's theorem has been 
found by B{\'a}r{\'a}ny, Fodor, Montejano, Oliveros and P{\'o}r.

\begin{theorem}[\textbf{Fractional Colorful Helly's theorem~\cite{fractional_colorful}}]
Let $\mathcal{F}$ be a finite family of convex sets such that each
set is assigned one of $d+1$ colors. Let $\mathcal{F}_1,\dots, \mathcal{F}_{d+1}$
be its color classes. Suppose that for some $\alpha >0$, at least 
$\alpha |\mathcal{F}_1|\cdots|\mathcal{F}_{d+1}|$ of the colorful tuples
have non-empty intersection. Then some $\mathcal{F}_i$ contains a subfamily 
of at least $\frac{\alpha}{d+1} |\mathcal{F}_i|$ sets with a non-empty intersection.
\end{theorem}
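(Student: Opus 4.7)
The plan is to assign to each intersecting colorful tuple a canonical ``witness'' point in its intersection, and then extract the desired subfamily by two successive pigeonhole arguments: first over which set in a tuple is ``central'' at its witness, and then over the remaining $d$-tuple.

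First, after a small perturbation and after intersecting every $C_i$ with a large common bounding box, I would assume the sets are compact and in general position. For each intersecting colorful tuple $T=(C_1,\dots,C_{d+1})$, let $p_T$ denote the lexicographically smallest point of $\bigcap_{i=1}^{d+1} C_i$. In general position, $p_T$ lies on the boundary of exactly $d$ of the $C_i$ and in the interior of the remaining one; call the color of that ``central'' set the \emph{distinguished color} $i(T)$.

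Since there are at least $\alpha\prod_i|\mathcal{F}_i|$ intersecting colorful tuples, by pigeonhole some color $i^*$ is the distinguished color for at least $\frac{\alpha}{d+1}\prod_i|\mathcal{F}_i|$ of them. The key observation is that when $i(T)=i^*$, the constraint imposed by $C_{i^*}$ is inactive at $p_T$, so $p_T$ equals the lex-min $q(D_1,\dots,D_d)$ of $\bigcap_{j\ne i^*} D_j$, where $(D_1,\dots,D_d)$ denotes $(C_j)_{j\ne i^*}$; in particular, $p_T$ depends only on the $d$ components of $T$ outside color $i^*$. Consequently the number of tuples with $i(T)=i^*$ can be rewritten as
\[\sum_{(D_1,\dots,D_d)}|\{C\in\mathcal{F}_{i^*} : q(D_1,\dots,D_d)\in C\}|,\]
the sum running over colorful $d$-tuples from the remaining colors with nonempty intersection. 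Since there are at most $\prod_{j\ne i^*}|\mathcal{F}_j|$ summands and the sum is at least $\frac{\alpha}{d+1}|\mathcal{F}_{i^*}|\prod_{j\ne i^*}|\mathcal{F}_j|$, some $d$-tuple $(D_1^*,\dots,D_d^*)$ produces a witness point $q^*$ contained in at least $\frac{\alpha}{d+1}|\mathcal{F}_{i^*}|$ sets of $\mathcal{F}_{i^*}$, yielding the desired subfamily with common point $q^*$.

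The main obstacle will be the degenerate configurations: when $p_T$ lies on the boundary of more than $d$ of the $C_i$, the distinguished color is not uniquely defined, and the reconstruction of $p_T$ from the $d$-tuple $(D_1,\dots,D_d)$ breaks down. This can be handled either by a standard perturbation-and-limit argument, or by refining the lexicographic ordering with a deterministic tie-breaking rule so that the central set of each intersection is uniquely specified.
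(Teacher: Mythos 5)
This theorem is stated in the paper only as a quoted result from B\'ar\'any, Fodor, Montejano, Oliveros and P\'or; the paper gives no proof of it, so there is nothing internal to compare against. Judged on its own, your argument is essentially correct, and it is in fact the standard route to this statement: the lexicographic-minimum argument behind the Katchalski--Liu Fractional Helly theorem (see Matou\v{s}ek's book, Section 8.1), with the pigeonhole taken over the $d+1$ colors rather than over all $d$-element subtuples. That is exactly where the clean factor $\frac{\alpha}{d+1}$ comes from, so the quantitative bookkeeping in your second and third paragraphs is right: at most $\prod_{j\neq i^*}|\mathcal{F}_j|$ summands, total at least $\frac{\alpha}{d+1}|\mathcal{F}_{i^*}|\prod_{j\neq i^*}|\mathcal{F}_j|$, hence some witness point lies in at least $\frac{\alpha}{d+1}|\mathcal{F}_{i^*}|$ sets of $\mathcal{F}_{i^*}$.

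The one point you should repair is the definition of the distinguished color. The claim that in general position $p_T$ lies on the boundary of exactly $d$ of the sets and in the interior of the remaining one is false even for $d=2$: the lex-min of the intersection of three discs can be the leftmost point of a single disc lying in the interiors of the other two, so it is on the boundary of only one set. Boundary incidence is neither necessary nor sufficient for a set to be ``droppable.'' The correct tool is the standard lemma (provable by a Helly-type argument, no general position needed): if $C_1,\dots,C_{d+1}$ are compact convex sets in $\mathbb{R}^d$ with nonempty intersection, then there is an index $i$ such that the lexicographic minimum of $\bigcap_{j\neq i}C_j$ equals that of $\bigcap_j C_j$. Define $i(T)$ as the smallest such index (deterministic tie-breaking), and note for the counting step that if $q(D_1,\dots,D_d)\in C$ then the lex-min of the full tuple's intersection is exactly $q(D_1,\dots,D_d)$, so every tuple with $i(T)=i^*$ is counted by your sum (you only need the inequality $\le$, not the equality you assert). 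With that substitution the perturbation and general-position apparatus can be dropped entirely; only the bounding-box truncation to ensure compactness is needed.
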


The following result was proved by Pach~\cite{homogenous_pach}. It is a fractional/colorful version of the 
Centerpoint theorem; it also bears some resemblance to the Depth Carath\'eodory
theorem.

\begin{theorem}
 There exists a constant $c_d>0$ such that the following holds.
 Let $X_1,\dots,X_{d+1}$ be finite sets of points in general position in $\mathbb{R}^d$.
 Then there exist a point $q$ and subsets $Y_1 \subset X_1,\dots,Y_{d+1} \subset X_{d+1}$ such that the following
 holds. Each $Y_i$ has at least $c_d|X_i|$ points; and for every choice $x_i \in Y_i$, $q$ is contained
 in $\conv(\{x_1,\dots,x_{d+1}\})$.
\end{theorem}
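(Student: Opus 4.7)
The plan is to combine the Same-type lemma (Theorem~\ref{thm:same_type}) with a ``central common point'' argument. First, I apply the Same-type lemma to the family $X_1,\dots,X_{d+1}$ with $m = d+1$, obtaining subsets $Y_i \subseteq X_i$ of cardinality at least $c'(d, d+1)\,|X_i|$ such that every colorful $(d+1)$-tuple $(y_1,\dots,y_{d+1})$, $y_i \in Y_i$, has the same orientation. Taking $c_d := c'(d, d+1)$ handles the size condition on the $Y_i$.

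The remaining task is to exhibit a single point $q$ lying in every colorful simplex $\conv(\{y_1,\dots,y_{d+1}\})$. Equivalently, I want to show that
\[
K := \bigcap_{(y_1,\dots,y_{d+1}) \in Y_1 \times \cdots \times Y_{d+1}} \conv(\{y_1,\dots,y_{d+1}\})
\]
is non-empty, since any $q \in K$ will serve. Same-orientation admits a cleaner reformulation of $K$: for each $i \in \{1,\dots,d+1\}$ and every hyperplane $H$ passing through one point from each $Y_j$, $j\neq i$, all of $Y_i$ lies on a single, fixed side of $H$. Let $A_i$ be the intersection of all such closed halfspaces containing $Y_i$. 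Each $A_i$ is a non-empty closed convex set containing $\conv(Y_i)$, and $K = \bigcap_{i=1}^{d+1} A_i$.

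I would then prove $K \neq \emptyset$ by induction on $d$. The base case $d = 1$ is immediate: by the same-orientation property $Y_1$ and $Y_2$ are strictly separated on the line, so the gap between $\conv(Y_1)$ and $\conv(Y_2)$ is a non-empty interval contained in $K$. For the inductive step, I fix a vertex $y^* \in Y_{d+1}$, project radially from $y^*$ onto a generic affine hyperplane $\Pi \subset \mathbb{R}^d$ not containing $y^*$, and verify that the images $Y_1',\dots,Y_d' \subset \Pi$ inherit a same-orientation structure in $\Pi \cong \mathbb{R}^{d-1}$. The inductive hypothesis then yields a point $q' \in \Pi$ common to every colorful $d$-simplex of the projected configuration. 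The preimage of $q'$ under the projection is a ray from $y^*$, and a point on this ray chosen to additionally satisfy the constraints from $A_{d+1}$ serves as the desired $q$.

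The main obstacle will be executing the inductive step cleanly: in particular, verifying that central projection from $y^*$ faithfully transfers the same-orientation condition to the lower-dimensional configuration, and that the lifted point on the ray can indeed be selected so as to lie in $A_{d+1}$ as well as in $A_1,\dots,A_d$. This will likely require a careful choice of $\Pi$ together with a compactness or limiting argument, ensuring that the constraints imposed by all $d+1$ of the sets $A_i$ can be accommodated simultaneously along the lifted ray.
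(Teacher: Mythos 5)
This theorem is not proved in the paper at all: it is Pach's positive-fraction (homogeneous) selection theorem, quoted with a citation, so there is no in-paper argument to compare against; your proposal has to stand on its own, and it does not. The entire content of the theorem is hidden in your claim that once $Y_1,\dots,Y_{d+1}$ have same-type transversals, the set $K=\bigcap\conv(\{y_1,\dots,y_{d+1}\})$ over all colorful tuples is non-empty. The Same-type lemma only guarantees that every transversal has the same \emph{combinatorial} type; it does not produce a single point lying in all of the simplices. What it does give for free is the ``positive-fraction Radon/Tverberg'' statements, where the witness point is allowed to vary with the transversal --- and that is exactly what B\'ar\'any and Valtr derive from it. If your claim were correct, Pach's theorem would be a two-line corollary of Theorem~\ref{thm:same_type}; in fact every known proof needs substantially more: (i) a selection step (first selection lemma / colored Tverberg) producing a candidate point $a$ contained in a positive fraction of the colorful simplices, (ii) a weak-regularity or hereditary-density argument ensuring that every large sub-product of the color classes still spans at least one simplex containing $a$, and only then (iii) the Same-type lemma applied to the color classes \emph{together with the singleton} $\{a\}$, which upgrades ``some colorful simplex of the refined classes contains $a$'' to ``all of them do.'' Note that this last step is exactly how the paper itself uses Theorem~\ref{thm:same_type} in the proof of Theorem~\ref{thm:car_depth}: the point $q$ is adjoined as one of the sets. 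Your plan never produces the point $a$ and never adjoins it, so the same-type refinement gives you no leverage on containment.

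The place where this surfaces concretely is your inductive step. Fixing $y^*\in Y_{d+1}$ and projecting does show (correctly) that all colorful simplices \emph{using the vertex $y^*$} share a small common piece --- but that piece is anchored at $y^*$, sitting at the end of the ray through $q'$, while the simplices built on a different vertex $y^{**}\in Y_{d+1}$ impose constraints anchored at $y^{**}$. Nothing in the same-type hypothesis forces these two families of constraints to be simultaneously satisfiable along your ray, and you offer no mechanism for choosing the lift in $A_{d+1}$ beyond hoping a compactness argument works. This is not a technical loose end to be ``executed cleanly''; it is the theorem. To repair the proof you would need to follow the standard route: first secure a point $a$ that lies in a constant fraction of all colorful simplices and survives restriction to large subsets of each $X_i$, and only then invoke the Same-type lemma on $X_1,\dots,X_{d+1},\{a\}$.
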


\section*{Acknowledgements}
We thank the anonymous referees whose comments helped us improve our paper
significantly.

\bibliographystyle{plain}
\bibliography{car}

\end{document}